\documentclass[journal]{IEEEtran}

\usepackage{empheq}  
\usepackage{xcolor}
\definecolor{jxempheq}{RGB}{85, 239, 196}

\usepackage{amssymb}  
\usepackage{amsthm, amsmath}
\usepackage{mathtools}
\usepackage{extarrows}  
\usepackage{dsfont}

\newtheorem{remark}{Remark}
\newtheorem{theorem}{Theorem}

\newtheorem{corollary}{Corollary}[theorem]

\DeclarePairedDelimiter\abs{|}{|}  
\DeclarePairedDelimiter\norm{\|}{\|}
\providecommand\given{\:\vert\:}
\DeclarePairedDelimiterXPP\set[1]{}{\{}{\}}{}{\renewcommand\given{\nonscript\:\delimsize\vert\nonscript\:\mathopen{}}#1}

\DeclareMathOperator{\T}{T}

\DeclareMathOperator{\dif}{d}

\DeclareMathOperator{\diag}{diag}

\usepackage[caption=false,font=footnotesize,labelfont=rm,textfont=rm]{subfig}
\usepackage{url}
\usepackage[hidelinks,breaklinks]{hyperref}
\hypersetup{
    colorlinks=true,
    linkcolor=blue,
    filecolor=black,
    urlcolor=black,
    citecolor=blue,
}
\usepackage{cite}
\usepackage{array, arydshln, xtab}
\usepackage{multirow, booktabs, threeparttable}

\usepackage[ruled,vlined]{algorithm2e}
\SetAlgoProcName{Alg.}{Alg.}

\SetKwInOut{Input}{Input}
\SetKwInOut{Output}{Output}
\SetKwInOut{Initialization}{Initial}

\newcommand{\tabref}[1]{Table~\ref{#1}}
\newcommand{\figref}[1]{Fig.~\ref{#1}}
\renewcommand{\eqref}[1]{\textcolor{blue}{(\ref{#1})}}
\makeatletter
\def\@cite#1#2{{\color{blue}[{#1\if@tempswa , #2\fi}]}}
\makeatother

\usepackage{soul}  
\soulregister\cite7
\soulregister\eqref7
\soulregister\ref7
\soulregister\section7
\soulregister\subsection7

\usepackage{tikz}
\usetikzlibrary{arrows,shapes,positioning,chains,matrix,intersections,calc}
\usetikzlibrary{arrows.meta}
\usetikzlibrary{decorations.markings, decorations.pathreplacing}
\tikzstyle{block} = [draw, rectangle, minimum height=2em, minimum width=3.75em]
\tikzstyle{cblock} = [circle, draw, line width=2pt, inner sep=0pt, minimum size=0.75cm, font=\fontsize{9pt}{\baselineskip}\selectfont]
\tikzstyle{cblock2} = [circle, draw, line width=2pt, inner sep=1pt, minimum size=1.5cm, font=\fontsize{10pt}{\baselineskip}\selectfont]
\tikzstyle{triblk} = [draw, isosceles triangle, isosceles triangle apex angle=55, align=center,shape border rotate=0]
\tikzstyle{sum} = [draw, circle, minimum size=0.3cm]
\tikzstyle{input} = [coordinate]
\tikzstyle{output} = [coordinate]
\usepackage{pgfplots}
\pgfplotsset{compat=1.17}

\definecolor{jxgreen}{RGB}{120,224,143}
\definecolor{jxgreen2}{RGB}{184,233,148}
\definecolor{jxblue}{RGB}{96,163,188}
\definecolor{jxblue2}{RGB}{130,204,221}
\definecolor{jxorange}{RGB}{229,80,57}
\definecolor{jxorange2}{RGB}{248,194,145}
\definecolor{jxedge}{RGB}{106,137,204}

\newcommand{\jxrevise}[1]{\textcolor{black}{#1}}

\begin{document}

\title{System Strength-Constrained Scheduling with Switchable Grid-Forming and Grid-Following Generation Resources}

\author{Jiaxin~Wang,~\IEEEmembership{Graduate~Student~Member,~IEEE,}
    Fei~Teng,~\IEEEmembership{Senior~Member,~IEEE,}
    Huanhai~Xin,~\IEEEmembership{Senior~Member,~IEEE,}
    Jing~Dai,~\IEEEmembership{Member,~IEEE,}
    Tomislav Capuder,~\IEEEmembership{Member,~IEEE,}
    and Ning~Zhang,~\IEEEmembership{Senior~Member,~IEEE}
    {
        \scriptsize
        \thanks{
            Jiaxin~Wang and Ning~Zhang are with the State Key Laboratory of Power System Operation and Control, Department of Electrical Engineering, Tsinghua University, Beijing 100084, China.
            (email: \href{mailto:wjx22@mails.tsinghua.edu.cn}{wjx22@mails.tsinghua.edu.cn}; \href{mailto:ningzhang@tsinghua.edu.cn}{ningzhang@tsinghua.edu.cn}).

            Fei~Teng is with the Department of Electrical and Electronic Engineering, Imperial College London, London SW7 2AZ, U.K.
            (email: \href{mailto:f.teng@imperial.ac.uk}{f.teng@imperial.ac.uk}).

            Huanhai~Xin is with the College of Electrical Engineering, Zhejiang University, Hangzhou 310027, China.
            (email: \href{mailto:xinhh@zju.edu.cn}{xinhh@zju.edu.cn}).

            Jing~Dai is with the Think Tank Research Center, Tsinghua University, Beijing 100084, China.
            (email: \href{mailto:jingdai@tsinghua.edu.cn}{jingdai@tsinghua.edu.cn}).

            Tomislav~Capuder is with the Department of Energy and Power Systems, University of Zagreb, Zagreb 10000, Croatia. (email: \href{mailto:tomislav.capuder@fer.unizg.hr}{tomislav.capuder@fer.unizg.hr}).
        }
    }
}
\maketitle

\begin{abstract}
    Inverter-based resources (IBRs) are increasingly dominating modern power systems, posing significant challenges to cost-effectively maintain system strength for stability.
    At the same time, the operating behaviors of IBRs are software-defined, including both their steady-state power outputs and control modes, e.g. grid-forming (GFM) and grid-following (GFL).
    Such flexibility has not been fully explored to efficiently operate future power systems.
    This paper develops a novel framework that simultaneously optimizes IBR operating behaviors and ensures adequate system strength.
    A comprehensive solution is provided to integrate system strength constraints into scheduling models, despite their inherent strong non-convexities.
    We derive a rigorous linear-matrix-inequality (LMI) reformulation of the system strength constraint, effectively addressing non-explicit formulations and dimension variation issues caused by GFM/GFL mode switching of IBRs.
    Then, we equivalently convert the original non-convex implicit system strength-constrained scheduling problem into an explicit mixed-integer semi-definite programming (MISDP) problem by incorporating the reformulated system strength constraint along with other operational constraints.
    We further provide a Rayleigh Cut method, which is compatible with standard mixed-integer linear programming (MILP) solvers, to solve this system strength-constrained scheduling problem.
    Case studies on a modified IEEE 118-bus system and a practical Jiangsu power system demonstrate the performance of the proposed methods.
\end{abstract}

\begin{IEEEkeywords}
    inverter-based resources, IBR, mode switching, generalized short-circuit ratio, gSCR, renewable energy
\end{IEEEkeywords}

\section*{Nomenclature}

\subsection{Abbreviations}

\begin{IEEEdescription}[\IEEEusemathlabelsep\IEEEsetlabelwidth{$V_1,V_2,V_3$}]
    \item[gOSCR] {Generalized Operational Short-Circuit Ratio}
    \item[GFM] {Grid-Forming}
    \item[GFL] {Grid-Following}
    \item[IBR] {Inverter-Based Resource}
    \item[MIP] {Mixed-Integer Programming}
    \item[MILP] {Mixed-Integer Linear Programming}
    \item[MISDP] {Mixed-Integer Semi-Definite Programming}
    \item[PV] {Photovoltaic}
    \item[LMI] {Linear Matrix Inequality}
    \item[SDP] {Semi-Definite Programming}
    \item[SG] {Synchronous Generator}
    \item[SOC] {State of Charge}
\end{IEEEdescription}

\subsection{Sets}

\begin{IEEEdescription}[\IEEEusemathlabelsep\IEEEsetlabelwidth{$V_1,V_2$}]
    \item[\(\mathcal{E}\)] {Buses with energy storage systems}
    \item[\(\mathcal{R}\)] {Buses with renewable energy sources}
    \item[\(\mathcal{G}\)] {Buses with synchronous generators}
    \item[\(\mathcal{L}\)] {Remaining Buses}
    \item[\(\mathcal{N}\)] {All buses \(\mathcal{N}=\mathcal{E}\cup\mathcal{R}\cup\mathcal{G}\cup\mathcal{L}\)}
    \item[\(\mathcal{I}\)] {IBR buses \(\mathcal{I}=\mathcal{E}\cup\mathcal{R}\)}
    \item[\(\mathcal{J}\)] {Other buses \(\mathcal{J}=\mathcal{G}\cup\mathcal{L}=\mathcal{N}\setminus\mathcal{I}\)}
    \item[\(\mathcal{T}\)] {Time horizons}
    \item[\(\mathcal{E}^{\mathrm{GFM}}_t\)] {Energy storage operating in GFM at time \(t\in\mathcal{T}\)}
    \item[\(\mathcal{E}^{\mathrm{GFL}}_t\)] {Energy storage operating in GFL at time \(t\in\mathcal{T}\)}
    \item[\(\mathcal{R}^{\mathrm{GFM}}_t\)] {Renewable operating in GFM at time \(t\in\mathcal{T}\)}
    \item[\(\mathcal{R}^{\mathrm{GFL}}_t\)] {Renewable operating in GFL at time \(t\in\mathcal{T}\)}
    \item[\(\mathcal{I}_t\)] {GFL buses \(\mathcal{I}_t = \mathcal{E}^{\mathrm{GFL}}_t\cup \mathcal{R}^{\mathrm{GFL}}_t\) at time \(t\in\mathcal{T}\)}
    \item[\(\mathcal{J}_t\)] {Non-GFL buses \(\mathcal{J}_t = \mathcal{N}\setminus\mathcal{I}_t\) at time \(t\in\mathcal{T}\)}
\end{IEEEdescription}

\subsection{Variables}

\begin{IEEEdescription}[\IEEEusemathlabelsep\IEEEsetlabelwidth{$V_1,V_2$}]
    \item[\(\gamma\)] {gOSCR}
    \item[\(\gamma_0\)] {gOSCR threshold for system strength adequacy}
    \item[\(x_{i,t}\)] {Binary variable indicating the operating mode of generation resource \(i\in\mathcal{I}\) at time \(t\in\mathcal{T}\), where \(x_{i,t}=1\) for GFM and \(x_{i,t}=0\) for GFL}
    \item[\(B^{\mathrm{pf}}\)] {Network admittance matrix used in the power flow model}
    \item[\(B^{\mathrm{sys}}\)] {System admittance matrix accounting for the internal admittances of online synchronous generators and GFM IBRs}
    \item[\(B\)] {Schur complement of the system admittance matrix reduced to the GFL buses}
    \item[\(\hat{B}\)] {Schur complement of the system admittance matrix reduced to the IBR buses}
    \item[\(P\)] {Diagonal matrix of GFL power injections}
    \item[\(\hat{P}\)] {Diagonal matrix of GFL power injections extended to all IBR buses, with zeros padded at non-GFL buses}
\end{IEEEdescription}

\section{Introduction}

\IEEEPARstart{T}{he} rapid growth of inverter-based resources (IBRs) is fundamentally reshaping modern power systems \cite{nikos2021definition,qingchun2020impact}.
Compared with traditional power systems, inverter-dominated systems replace large shares of synchronous generators with renewable energy sources and energy storage systems interfaced through power electronics \cite{yunjie2023power}.
While this transformation is essential for enable decarbonization, it may also result in insufficient system strength, potentially leading to IBR connection failures and severe stability issues such as system oscillations \cite{guoxuan2025many}.
However, conventional scheduling method determines operating decisions primarily based on steady-state constraints, with stability verified afterward.
It becomes inadequate for IBRs-dominated systems where system strength is highly sensitive to operational decisions of IBRs \cite{kehao2026quantitative} and growing difficult to rely on post-scheduling stability checks and manual adjustments to maintain both economic efficiency and operational security \cite{ning2023data}.
Therefore, it is essential to incorporate system strength constraints directly into the scheduling problem and develop system strength-constrained scheduling frameworks that simultaneously optimize IBR operating behaviors and ensure adequate system strength \cite{Kamwa2026future}.

System strength is strongly coupled with IBR operating behaviors \cite{linbin2026system}.
Traditionally, system strength was almost irrelevant to operating points as it was mainly determined by the network structure and the on/off status of synchronous generators \cite{krishayya1997ieee}.
However, in IBR-dominated power systems, complex interactive influences between system strength and IBR operating conditions arise.
In GFL mode, the steady-state power outputs of IBRs directly affect system strength as the dynamic performance is closely related to their operating points through phase-locked loop and outer loops \cite{yunjie2021impedance}.
In GFM mode, while providing support to system strength, IBRs must reserve part of their capacity in steady state operation to maintain approximately constant voltage during dynamics, which reduces the ability to support power balance and incurs higher operational costs \cite{yitong2022revisiting}.
\figref{fig:gfm-penetration} illustrates a possible distribution of system strength across different GFM/GFL ratios and operating points, suggesting that system strength can vary significantly with different IBR operating behaviors.

\begin{figure}[!t]
    \centering
    \includegraphics[width=0.86\linewidth]{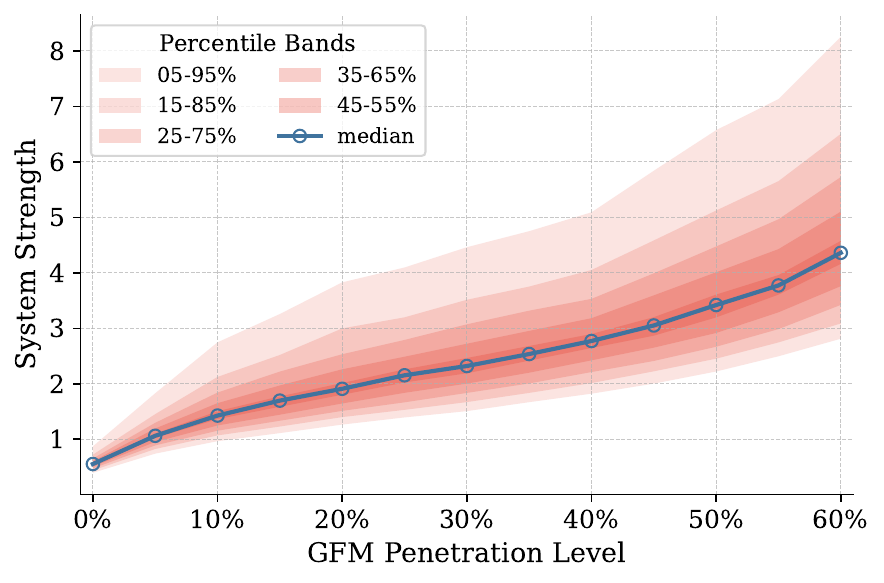}
    \caption{System strength may exhibit significant variation across different GFM/GFL ratios and operating points. Here, the system strength is measured by the gOSCR metric, and the operating points are generated by randomly sampling the power outputs of IBRs.}
    \label{fig:gfm-penetration}
\end{figure}

At the same time, the behaviors of IBRs, including both their steady-state power outputs and control modes, are software-defined and thus exhibit substantial flexibility to be optimized for enhancing system strength and improving economic efficiency \cite{liwei2023hierarchical}.
In particular, within their available generation limits, IBRs can be dispatched over a wide operating range with high efficiency, as their output adjustments are realized through fast power electronic control rather than mechanical actuation \cite{mahdi2025adaptive}.
\jxrevise{Beyond power dispatch, recent studies have shown that IBRs can also be equipped with device-level control schemes that enable flexible switching between GFL and GFM modes.
    For example, switchable GFL/GFM control architectures have been developed to support mode transition under different grid conditions \cite{huazhao2025novel}, and seamless switching strategies have also been proposed to reduce transient impacts during the transition between GFL and GFM operation \cite{niu2026seamless}.
    More importantly from a scheduling perspective, such mode-switching capability allows IBRs to operate in either GFL or GFM mode, provided that the required power reserve for GFM operation is satisfied.
    Such flexibility fundamentally transforms the paradigm of power system scheduling, offering new opportunities to simultaneously achieve system security and economic efficiency \cite{chu2023stabilityconstrainedoptimizationhigh}.}
Unfortunately, it also introduces great complexity to the scheduling problem, as the coupling between system strength and IBR operating behaviors must be explicitly captured and optimized within the scheduling framework.

To capture the complicated interactions between system strength and IBR operating behaviors, several metrics, including the grid strength impedance metric (GSIM) \cite{callum2024grid}, impedance margin ratio (IMR) \cite{yue2024impedance}, and generalized short-circuit ratio (gSCR) \cite{fuyilong2024assessing,yuanhui2025placing}, have been proposed to quantify system strength, thereby replacing traditional heuristic metrics \cite{cigre2008b4}.
While these metrics are rigorously derived and can effectively capture the interactions, the relationship between system strength and IBR operating behaviors is still non-explicit and only implicitly reflected through the whole-system impedance model, which may not be straightforward to be incorporated into scheduling problems.
To bridge this gap, generalized operational short-circuit ratio (gOSCR) has been developed to extract key scheduling-relevant variables such as power outputs and GFM/GFL mode selections of IBRs in a more explicit form \cite{chenxi2024generalized}, which is promising to be incorporated into system strength-constrained scheduling models.

Several studies have investigated the integration of system strength constraints into power system optimization problems.
Early works \cite{yongkyu2022evaluation,zhongda2023voltage,yongkyu2025strength} developed port equivalence, quadratic regression, and linearization techniques to embed system strength constraints into the unit commitment problem.
While these methods enable tractable mixed-integer linear programming (MILP) formulations, they rely on simplified traditional metrics that cannot fully capture the complex interactions with IBR operating behaviors.
To overcome these limitations, recent studies have begun to incorporate mathematically more rigorous system strength metrics into power system optimization models.
One representative direction is to approximate control-related system strength constraints using data-driven surrogate models, which can then be reformulated as mixed-integer linear constraints and integrated into scheduling problems \cite{guoxuan2025control,juelin2022explicit}.
This type of method provides a practical way to embed otherwise complex stability-related constraints into tractable MILP formulations.
However, its accuracy inherently depends on the representativeness of the sampled operating points and the quality of the fitted surrogate model.
Therefore, it cannot provide a strict theoretical guarantee that scheduling decisions feasible for the surrogate constraints always satisfy the original system strength requirements.
Another direction seeks analytically derived reformulations of system strength constraints.
For example, linear-matrix-inequality-based (LMI-based) formulations of system strength constraints have been developed to directly incorporate stability requirements into optimization problems without data-fitting approximation errors \cite{huanhai2025many,jiaxin2025synchronous}.
Nevertheless, these LMI-based formulations usually simplify the operational modeling of IBRs when embedded into scheduling frameworks.
In particular, the complex coupling between system strength and IBR flexibility, including GFL/GFM mode switching, energy storage charging and discharging, and the power headroom required for GFM operation, has not been fully captured.
As a result, existing studies either rely on tractable approximations without strict feasibility guarantees or adopt rigorous strength representations under simplified operational modeling, leaving a gap for a computationally tractable scheduling framework that can fully exploit IBR operational flexibility while ensuring adequate system strength.

In addition to modeling these complex constraints, another key challenge lies in solving the resulting mixed-integer scheduling problem efficiently.
Cutting-plane-based algorithms have become mature tools for solving unit commitment and other power system scheduling problems with complex operational constraints.
A classical example is Benders-decomposition-based security-constrained unit commitment, where the unit commitment master problem is iteratively strengthened by feasibility and optimality cuts generated from network-security subproblems \cite{fu2013modeling}.
Recent studies have further adapted cutting-plane ideas to stability-related scheduling problems.
For example, column-and-constraint generation has been used to solve small-signal-stability-constrained operation of inverter-dominated microgrids \cite{wang2024stability}, while Benders decomposition has been applied to learning-assisted transient-stability-constrained unit commitment \cite{wu2024transient}.
The key to an effective cutting-plane framework is not the generic iterative structure itself, but the automatic generation of effective, problem-specific cuts.
However, for the system strength-constrained scheduling problem considered in this paper, the constraints are tightly coupled with operating decisions of IBRs, and the original feasible region is highly non-convex and non-linear, making it difficult to generate effective cuts.

To address the existing gaps, this paper proposes a system strength-constrained scheduling model that optimizes IBR operating behaviors while ensuring adequate system strength through the gOSCR metric.
First, we formulate a comprehensive system strength-constrained scheduling model that captures the coupling between system strength and various scheduling decisions, including the on/off status of synchronous generators, GFM/GFL mode switching of IBRs, and their power outputs.
Then, we propose and rigorously prove an equivalent LMI reformulation of the system strength constraint, converting the original difficult scheduling problem into an explicit mixed-integer semidefinite programming (MISDP) problem.
To the best of our knowledge, this is the first work to exactly transform the system strength constraint with complex coupling between operational decisions and system strength into an MISDP formulation without approximation error.
Finally, we provide a Rayleigh Cut solution method to solve the problem, guaranteeing security and optimality (within a prescribed tolerance) while being compatible with standard MILP solvers.

The main contributions of this paper are summarized:
\begin{itemize}
    \item A comprehensive system strength-constrained scheduling model is formulated that captures the coupling between system strength and various scheduling decisions. It can be also applied to analyze how the system strength constraint impacts the operational behavior of IBRs, including their power outputs and GFL/GFM mode selections.
    \item A MISDP reformulation is derived for the proposed system strength-constrained scheduling model without approximation errors through a series of rigorous proofs, rendering the originally intractable system strength constraints amenable to optimization-based scheduling.
    \item A Rayleigh Cut solution method is provided to solve the resulting system strength-constrained scheduling problem, which guarantees security and optimality while being compatible with standard MILP solvers.
\end{itemize}

The remainder of this paper is organized as follows.
Section~\ref{sec:motivations} formulates the overall system strength-constrained scheduling model in abstract form, and provides a complete view of the challenges involved and the solution framework.
Section~\ref{sec:formulation} proposes a detailed formulation of the IBR operating constraints and discusses the strong coupling between them and the system strength constraint.
Section~\ref{sec:reformulation} reformulates the system strength constraint into an equivalent LMI form and derives the MISDP reformulation of the scheduling model.
A Rayleigh Cut solution method is provided to solve the obtained model.
Section~\ref{sec:cases} conducts numerical experiments on a modified IEEE 118-bus system to demonstrate the performance of the proposed methods.
Section~\ref{sec:conclusions} concludes this paper.

\section{Problem Statement and the Proposed Solution\label{sec:motivations}}

\begin{figure*}[hbtp]
    \begin{equation}
        \label{eq:B-ori}
        B^{\mathrm{sys}}=\begin{tikzpicture}[baseline=(m.center)]
            \small
            \matrix (m) [
            matrix of math nodes,
            left delimiter={[},
            right delimiter={]},
            row sep=3.5pt,
            column sep=5pt
            ]
            {
            B^{\mathrm{sys}}_{\mathcal{E}^{\mathrm{GFL}}_t\mathcal{E}^{\mathrm{GFL}}_t} & \ast & \ast & \ast & \ast & \ast \\
            B^{\mathrm{sys}}_{\mathcal{R}^{\mathrm{GFL}}_t\mathcal{E}^{\mathrm{GFL}}_t} & B^{\mathrm{sys}}_{\mathcal{R}^{\mathrm{GFL}}_t\mathcal{R}^{\mathrm{GFL}}_t} & \ast & \ast & \ast & \ast \\
            B^{\mathrm{sys}}_{\mathcal{E}^{\mathrm{GFM}}_t\mathcal{E}^{\mathrm{GFL}}_t} & B^{\mathrm{sys}}_{\mathcal{E}^{\mathrm{GFM}}_t\mathcal{R}^{\mathrm{GFL}}_t} & B^{\mathrm{sys}}_{\mathcal{E}^{\mathrm{GFM}}_t\mathcal{E}^{\mathrm{GFM}}_t} & \ast & \ast & \ast \\
            B^{\mathrm{sys}}_{\mathcal{R}^{\mathrm{GFM}}_t\mathcal{E}^{\mathrm{GFL}}_t} & B^{\mathrm{sys}}_{\mathcal{R}^{\mathrm{GFM}}_t\mathcal{R}^{\mathrm{GFL}}_t} & B^{\mathrm{sys}}_{\mathcal{R}^{\mathrm{GFM}}_t\mathcal{E}^{\mathrm{GFM}}_t} & B^{\mathrm{sys}}_{\mathcal{R}^{\mathrm{GFM}}_t\mathcal{R}^{\mathrm{GFM}}_t} & \ast & \ast \\
            B^{\mathrm{sys}}_{\mathcal{G}\mathcal{E}^{\mathrm{GFL}}_t} & B^{\mathrm{sys}}_{\mathcal{G}\mathcal{R}^{\mathrm{GFL}}_t} & B^{\mathrm{sys}}_{\mathcal{G}\mathcal{E}^{\mathrm{GFM}}_t} & B^{\mathrm{sys}}_{\mathcal{G}\mathcal{R}^{\mathrm{GFM}}_t} & B^{\mathrm{sys}}_{\mathcal{G}\mathcal{G}} & \ast \\
            B^{\mathrm{sys}}_{\mathcal{L}\mathcal{E}^{\mathrm{GFL}}_t} & B^{\mathrm{sys}}_{\mathcal{L}\mathcal{R}^{\mathrm{GFL}}_t} & B^{\mathrm{sys}}_{\mathcal{L}\mathcal{E}^{\mathrm{GFM}}_t} & B^{\mathrm{sys}}_{\mathcal{L}\mathcal{R}^{\mathrm{GFM}}_t} & B^{\mathrm{sys}}_{\mathcal{L}\mathcal{G}} & B^{\mathrm{sys}}_{\mathcal{L}\mathcal{L}} \\
            };

            \draw ($(m-2-1.south) + (-2.7em, -2pt)$) -- ++(10.34, 0);
            \draw ($(m-1-2.east) + (2.6em, 0.8em)$) -- ++(0, -4.55);

            \node (m16) at (m-1-6) {};
            \node (m26) at (m-2-6) {};
            \node (m36) at (m-3-6) {};
            \node (m46) at (m-4-6) {};
            \node (m56) at (m-5-6) {};
            \node (m66) at (m-6-6) {};
            \node[right=18pt of m16] (R1) {\color{jxorange}$\mathcal{E}_t^{\mathrm{GFL}}$};
            \node (R2) at ($(R1 |- m26)$) {\color{jxorange}$\mathcal{R}_t^{\mathrm{GFL}}$};
            \node (R3) at ($(R1 |- m36)$) {\color{jxblue}$\mathcal{E}_t^{\mathrm{GFM}}$};
            \node (R4) at ($(R1 |- m46)$) {\color{jxblue}$\mathcal{R}_t^{\mathrm{GFM}}$};
            \node (R5) at ($(R1 |- m56)$) {\color{jxblue}$\mathcal{G}$};
            \node (R6) at ($(R1 |- m66)$) {\color{jxgreen}$\mathcal{L}$};

            \node (m11) at (m-1-1) {};
            \node (m12) at (m-1-2) {};
            \node (m13) at (m-1-3) {};
            \node (m14) at (m-1-4) {};
            \node (m15) at (m-1-5) {};
            \node (m16) at (m-1-6) {};
            \node[above=7pt of m11] (C1) {\color{jxorange}$\mathcal{E}_t^{\mathrm{GFL}}$};
            \node at ($(C1 -| m12)$) {\color{jxorange}$\mathcal{R}_t^{\mathrm{GFL}}$};
            \node at ($(C1 -| m13)$) {\color{jxblue}$\mathcal{E}_t^{\mathrm{GFM}}$};
            \node at ($(C1 -| m14)$) {\color{jxblue}$\mathcal{R}_t^{\mathrm{GFM}}$};
            \node at ($(C1 -| m15)$) {\color{jxblue}$\mathcal{G}$};
            \node at ($(C1 -| m16)$) {\color{jxgreen}$\mathcal{L}$};

            \draw[decorate,decoration={brace,amplitude=5pt}] ($(R1.east) + (3pt, 2pt)$) -- ($(R1.east |- R2.east) + (3pt, -2pt)$);
            \node at ($(R1.east)!0.5!(R2.east)$) [right=8pt] {$\mathcal{I}_t$};
            \draw[decorate,decoration={brace,amplitude=5pt}] ($(R3.east) + (2pt, 2pt)$) -- ($(R3.east |- R6.east) + (2pt, -2pt)$);
            \node at ($(R3.east)!0.5!(R6.east)$) [right=11pt] {$\mathcal{J}_t$};
        \end{tikzpicture}
    \end{equation}
\end{figure*}

\subsection{Notational Conventions}

For a set $\mathcal{T}$, the symbol $\abs{\mathcal{T}}$ denotes its cardinality.
For a vector $v$, we use $\norm{v}_2$ to denote its 2-norm and $v_i$ to denote its $i$-th entry.
The notation $\diag(v)$ denotes a diagonal matrix with $v_i$ as the $i$-th diagonal entry.
For a matrix $A$, the submatrix $A_{\mathcal{S} \mathcal{T}}$ refers to the block of $A$ consisting of rows indexed by the set $\mathcal{S}$ and columns indexed by the set $\mathcal{T}$.
We use $I$ to denote the identity matrix of appropriate size.
Additional notations will be introduced as needed.

Then, we consider a power system with $n_{\mathrm{E}}$ energy storage systems, $n_{\mathrm{R}}$ renewable energy sources, and $n_{\mathrm{G}}$ synchronous generators, where all energy storage systems and renewable energy sources are interfaced through inverters.
Let $\mathcal{E}=\set{1,2,\ldots,n_{\mathrm{E}}}$, $\mathcal{R}=\set{n_{\mathrm{E}}+i\given i=1,2,\ldots,n_{\mathrm{R}}}$, and $\mathcal{G}=\set{n_{\mathrm{E}}+n_{\mathrm{R}}+i\given i=1,2,\ldots,n_{\mathrm{G}}}$ denote the correpsponding index sets.
Then, the bus index set of the entire system is $\mathcal{N}=\mathcal{E}\cup\mathcal{R}\cup\mathcal{G}\cup\mathcal{L}$, where $\mathcal{L}=\set{n_{\mathrm{E}}+n_{\mathrm{R}}+n_{\mathrm{G}}+i\given i=1,2,\ldots,n_{\mathrm{L}}}$ denotes the set of remaining buses without generation resources.
Moreover, we partition $\mathcal{E}=\mathcal{E}_t^{\mathrm{GFM}}\cup\mathcal{E}_t^{\mathrm{GFL}}$ and $\mathcal{R}=\mathcal{R}_t^{\mathrm{GFM}}\cup\mathcal{R}_t^{\mathrm{GFL}}$ according to the operating modes (GFM v.s. GFL) of energy storage systems and renewable energy sources at time $t\in\mathcal{T}$, respectively.
The system admittance matrix, $B^{\mathrm{sys}}$, can be partitioned according to the above bus sets as shown in \eqref{eq:B-ori}, where $\ast$ denotes the symmetric entries.
We further define $\mathcal{I}_t=\mathcal{E}_t^{\mathrm{GFL}}\cup\mathcal{R}_t^{\mathrm{GFL}}$ and $\mathcal{J}_t=\mathcal{N}\setminus\mathcal{I}_t$ for brevity.

\subsection{System Strength-Constrained Scheduling Problem}

We briefly introduce the formulation of the system strength-constrained scheduling problem considered in this work to provide a basic conceptual understanding of its structure.
The complete and detailed formulation is presented in Section~\ref{sec:reformulation-scheduling}, following the development of the core theoretical results.

The objective is to minimize the total system operating cost.
The decision variables include both continuous and binary variables.
The binary variables comprise the GFM/GFL mode-switching variables of IBRs (including both renewable generators and energy storage systems), the commitment-status variables of synchronous generators, and the mutually exclusive charging/discharging status variables of energy storage systems.
The constraints can be broadly classified into three categories.
The first category consists of network constraints, including the DC power flow equations and transmission capacity limits, which are presented in \eqref{eq:network}.
The second category comprises generation-side operating constraints, including renewable generation limits, energy storage charging and discharging constraints, unit commitment constraints for synchronous generators, and GFM/GFL mode-switching constraints for IBRs.
Among these constraints, the unit commitment constraints follow the standard formulations used in conventional unit commitment problems \cite{yonghong2023security}.
The remaining constraints, which are specific to the scheduling framework considered here, are formulated in detail in Section~\ref{sec:formulation}.
The third category is the system strength constraint, which is introduced below and reformulated in Section~\ref{sec:reformulation}.

The generalized operational short-circuit ratio (gOSCR) is adopted as the system strength metric \cite{yuanhui2025placing,chenxi2024generalized}
\begin{equation}
    \label{eq:gOSCR-constraint}
    \gamma \coloneqq \min\lambda^+\left(P^{-1}B\right)\geq \gamma_0,
\end{equation}
where $\lambda^+(\cdot)$ denotes the positive spectrum set of a matrix.
\(P\) is a diagonal matrix composed of the operating power of IBRs in the GFL mode, while \(B\) is the Schur-complement reduction of \(B^{\mathrm{sys}}\) onto the buses associated with GFL-mode IBRs.
The dimensions of \(P\) and \(B\) are both equal to the number of IBRs operating in GFL mode and therefore vary with the GFM/GFL mode-switching decisions.
Time index $t\in\mathcal{T}$ is omitted here for brevity.
Specifically,
\begin{equation}
    \label{eq:P-def}
    P \coloneqq \diag\left([p_{i,t}]_{i\in\mathcal{I}_t}\right),
\end{equation}
\begin{equation}
    \label{eq:B-def}
    B \coloneqq B_{\mathcal{I}_t\mathcal{I}_t}^{\mathrm{sys}}-B_{\mathcal{I}_t\mathcal{J}_t}^{\mathrm{sys}}\left(B_{\mathcal{J}_t\mathcal{J}_t}^{\mathrm{sys}}\right)^{-1}B_{\mathcal{J}_t\mathcal{I}_t}^{\mathrm{sys}},
\end{equation}
where $p_{i,t}$ for $i\in\mathcal{I}_t$ denotes the power injecting to the grid by the corresponding IBR under GFL mode.
We note that $B_{\mathcal{I}_t\mathcal{I}_t}^{\mathrm{sys}}$ and $B_{\mathcal{J}_t\mathcal{J}_t}^{\mathrm{sys}}$ are decision variables implicitly determined by other scheduling decisions such as the on/off status of synchronous generators and the GFM/GFL mode switching of IBRs, while $B_{\mathcal{I}_t\mathcal{J}_t}^{\mathrm{sys}}$ and $B_{\mathcal{J}_t\mathcal{I}_t}^{\mathrm{sys}}$ are constants determined by the network topology and parameters.

\begin{remark}
    The admittance matrix $B^{\mathrm{sys}}$ accounts for the internal admittances of online synchronous generators and grid-forming IBRs, distinguishing it from the one (denoted $B^{\mathrm{pf}}$) used in power flow equations.
    This paper adopts the sign convention that the diagonal entries of both $B^{\mathrm{sys}}$ and $B^{\mathrm{pf}}$ are positive.
    The on/off status of synchronous generators and the switching between GFM and GFL modes of IBRs directly affect the formation of $B$, while the operating power of renewable energy sources (which are always injecting power) and energy storage systems (which can be both charging and discharging) under GFL mode form the diagonal entries of $P$.
    The system strength is strongly coupled with the aforementioned scheduling decisions.
\end{remark}

\subsection{Challenges and the Proposed Solution}

\begin{table*}[!t]
    \centering
    \caption{Challenges and the Corresponding Solutions in System Strength-Constrained Scheduling}\label{tab:challenges}
    \renewcommand{\arraystretch}{1.3}
    \begin{tabular}{m{3.5cm} | >{\centering\arraybackslash}m{4cm} | >{\centering\arraybackslash}m{6cm} | >{\centering\arraybackslash}m{1.5cm}}
        \hline\hline
        \multicolumn{2}{c|}{Identified Challenges}                    & \multicolumn{2}{c}{Proposed Solution}                                                                                                                                                                          \\ \hline\hline     operational constraints                              & formulating operational constraints for IBRs with GFM/GFL mode switching behaviors & introduce mode switching variables $x_{i,t}$ and formulate corresponding constraints such as power output headroom & Section~\ref{sec:formulation-switching}  \\ \hline
        \multirow{5.5}{*}{\parbox{3.5cm}{system strength constraint}} & $P,B$ are nonlinear with respect to other scheduling decision variables  & an equivalent linearization technique without introducing additional binary variables & Section~\ref{sec:formulation-linearization} \\ \cline{2-4}
                                                                      & non-convex matrix product $P^{-1}B$                                      & a separation method based on the proposed Theorem~\ref{thm:1}                         & \multirow{3.7}{*}{\shortstack[c]{
        Section~\ref{sec:reformulation-exact}                                                                                                                                                                                                                                          \\[3pt]
                Section~\ref{sec:reformulation-scheduling}
        }}                                                                                                                                                                                                                                                                             \\ \cline{2-3}
                                                                      & varying dimensions of $P,B$ depend on the values of scheduling decisions & a fixed-dimension reformulation based on the proposed Theorem~\ref{thm:2}             &                                             \\ \hline
        entire scheduling problem                                     & should be solver-friendly for mainstream commercial solvers              & a Rayleigh Cut solution method compatible with standard MILP solvers                  & Section~\ref{sec:reformulation-rayleigh}    \\ \hline\hline
    \end{tabular}
\end{table*}

Incorporating the system strength constraint \eqref{eq:gOSCR-constraint} into the scheduling problem, while crucial, presents significant challenges due to its non-convexity and implicit dependence on various scheduling decisions.
We identify three main challenges as follows.

First, renewable energy sources and energy storage systems operating under GFL mode and GFM mode have distinct characteristics.
There is additional headroom limits for GFM mode, which complicates the formulation of operational constraints for IBRs with GFM/GFL mode switching behaviors.
Second, the system strength constraint \eqref{eq:gOSCR-constraint} involves the minimum positive eigenvalue of a matrix product $P^{-1}B$.
The decision variables $P$ and $B$ are nonlinear to other scheduling decision variables, and the minimum positive eigenvalue function is non-convex and has no explicit expression.
Moreover, the dimensions of $P$ and $B$ are determined by the index sets $\mathcal{I}_t$ and $\mathcal{J}_t$, which vary with different GFM/GFL decisions.
It is nearly impossible to have the dimension of a matrix-type decision variable automatically adjust according to the values of other decision variables in mathematical optimizations.
Third, the whole system strength-constrained scheduling problem is required to be solver-friendly such that it can be handled by mainstream commercial solvers, which is non-trivial given the aforementioned challenges.

\tabref{tab:challenges} summarizes the identified challenges and lists the corresponding proposed solutions in this paper.
Regarding the first challenge, we introduce mode switching variables $x_{i,t}$ for each IBR and formulate corresponding constraints such as power output headroom to effectively capture the operational characteristics of IBRs under GFM/GFL mode switching behaviors (see Section~\ref{sec:formulation-switching}).
Regarding the second challenge, we derive an equivalent linearization technique without introducing additional binary variables to handle the nonlinearity of $P$ and $B$ with respect to other scheduling decision variables (see Section~\ref{sec:formulation-linearization}).
Then, we propose a separation method based on Theorem~\ref{thm:1} to handle the non-convex matrix product $P^{-1}B$, and a fixed-dimension reformulation based on Theorem~\ref{thm:2} to handle the varying dimensions of $P$ and $B$ (see Section~\ref{sec:reformulation-exact} and Section~\ref{sec:reformulation-scheduling}).
Finally, we provide a Rayleigh Cut solution method compatible with standard MILP solvers to ensure the solver-friendliness of the entire scheduling problem (see Section~\ref{sec:reformulation-rayleigh}).

\section{Formulation of the Operating Constraints for IBRs with GFM/GFL Mode Switching\label{sec:formulation}}

In this section, we formulate the operating behavior of IBRs under GFM and GFL modes.
The device-level operational constraints are concretized by describing the switching between GFM and GFL modes of IBRs and the corresponding power output constraints.
An equivalent linearization of these operational constraints is then provided to address the nonlinearity brought by mode switching.

\subsection{Switching Between GFM and GFL Modes\label{sec:formulation-switching}}

We define a binary variable $x_{i,t}$ for all $i\in\mathcal{E}\cup\mathcal{R}$ to indicate the GFM/GFL mode of IBRs and for all $i\in\mathcal{G}$ to indicate the on/off status of synchronous generators as follows:
\begin{equation}
    \begin{split}
        x_{i,t} & =0 \iff i\in\mathcal{E}^{\mathrm{GFL}}_t\cup\mathcal{R}^{\mathrm{GFL}}_t\cup\mathcal{G}^{\mathrm{OFF}}_t \\
        x_{i,t} & =1 \iff i\in\mathcal{E}^{\mathrm{GFM}}_t\cup\mathcal{R}^{\mathrm{GFM}}_t\cup\mathcal{G}^{\mathrm{ON}}_t
    \end{split}
\end{equation}
where $\mathcal{G}^{\mathrm{OFF}}_t$ and $\mathcal{G}^{\mathrm{ON}}_t$ denote the index sets of offline and online synchronous generators at time $t$, respectively.

IBRs operating in GFM mode need to reserve certain capacity to remain grid-forming capability.
Thus, their maximum power limits vary with the GFM/GFL mode.
We formulate a unified constraint to describe the maximum power limit $\widetilde{p_{i,t}}$ of a IBR under GFM/GFL modes as follows:
\begin{equation}
    \label{eq:maximum-power}
    \widetilde{p_{i,t}} = p_{i,t}^{\max} - \alpha_i x_{i,t},\quad \forall i\in\mathcal{E}\cup\mathcal{R},
\end{equation}
where $\alpha_i$ denotes the reserved capacity for GFM operation.
For energy storage $i\in\mathcal{E}$, $p_{i,t}^{\max}$ indicates its rated power, which is constant over all time horizons, i.e., $p_{i,t}=p_{i,1}\eqqcolon p_i^{\max}$ for all $t\in\mathcal{T}$.
For renewable energy source $i\in\mathcal{R}$, $p_{i,t}^{\max}$ indicates its total available power, which usually varies with time $t\in\mathcal{T}$.

For renewable energy sources, their scheduled power $p_{i,t}$ are constrained by the maximum power limit.
\begin{equation}
    \label{eq:renewable-power}
    0\leq p_{i,t}\leq \widetilde{p_{i,t}},\quad\forall i\in\mathcal{R}.
\end{equation}
\begin{remark}
    For renewable energy source $i\in\mathcal{R}$, if its total available power $p_{i,t}$ is less than the reserved capacity for GFM operation $\alpha_i$, i.e., $p_{i,t}^{\max}<\alpha_i$, it cannot operate in GFM mode.
    The constraints \eqref{eq:maximum-power} and \eqref{eq:renewable-power} can implicitly ensure that $x_{i,t}=0$ in this case.
\end{remark}

For energy storage systems, their scheduled power $p_{i,t}$ can be either positive (discharging) or negative (charging).
\begin{equation}
    \label{eq:energy-storage-output}
    p_{i,t} = p^{\mathrm{dis}}_{i,t} - p^{\mathrm{cha}}_{i,t},\quad\forall i\in\mathcal{E}.
\end{equation}
The charging and discharging powers are constrained by the maximum power limit.
\begin{subequations}
    \label{eq:cha-dis}
    \begin{align}
         & 0\leq                       p^{\mathrm{dis}}_{i,t}\leq \widetilde{p_{i,t}}\delta_{i,t} \\
         & 0\leq p^{\mathrm{cha}}_{i,t} \leq \widetilde{p_{i,t}}-\widetilde{p_{i,t}}\delta_{i,t}
    \end{align}
\end{subequations}
where $\delta_{i,t}\in\set{0,1}$ is a binary variable indicating whether the energy storage is discharging ($\delta_{i,t}=1$) or charging ($\delta_{i,t}=0$) at time $t$.
Besides, the energy storage systems are constrained by their energy capacity and energy balance:
\begin{subequations}
    \label{eq:energy-storage-soc}
    \begin{align}
        \beta_i x_{i,t}  \leq                              & E_{i,t}\leq E_i^{\max}  - \beta_i x_{i, t}                                                                                           \\
                                                           & E_{i,0} = E_{i,\abs{\mathcal{T}}}                                                                                                    \\
        E_{i,t} =  \left(1-\eta^{\mathrm{self}}_{i}\right) & E_{i,t-1} + \left(\eta_i^{\mathrm{cha}} p^{\mathrm{cha}}_{i,t} - \frac{p^{\mathrm{dis}}_{i,t}}{\eta_i^{\mathrm{dis}}}\right)\Delta t
    \end{align}
\end{subequations}
where $E_{i,t}$ denotes stored energy, $E_i^{\max}$ denotes the maximum energy capacity, $\beta_i$ denotes the energy reserve for GFM operation, $\eta^{\mathrm{self}}_{i}$ denotes the self-discharge rate, $\eta_i^{\mathrm{cha}}$ and $\eta_i^{\mathrm{dis}}$ denote the charging and discharging efficiencies, and $\Delta t$ denotes the time interval between two consecutive time steps.

For synchronous generators, their scheduled power $p_{i,t}$ are constrained by classic unit commitment constraints, which are omitted here for brevity.

Finally, we note that the decision variables $P$ and $B$ in \eqref{eq:gOSCR-constraint} are related to the switching decision variable $x_{i,t}$.
In detail, $P$ only includes the scheduled power of IBRs operating in GFL mode:
\begin{equation}
    \label{eq:P-x}
    \hat{P}=\diag\left(\begin{bmatrix}(1-x_{i,t})p_{i,t}\end{bmatrix}_{i\in\mathcal{E}\cup\mathcal{R}}\right).
\end{equation}
The decision variable $B$ is equal to the Kron-reduced admittance matrix $B^{\mathrm{sys}}$, according to \eqref{eq:B-ori}.
The elements of $B^{\mathrm{sys}}$ has the following relationship with $x_{i,t}$:
\begin{equation}
    \label{eq:Bsys-x}
    B^{\mathrm{sys}}_{ij} = \begin{cases}
        B^{\mathrm{pf}}_{ii} - b_i x_{i,t}\quad     & i=j\in\mathcal{E}\cup\mathcal{R}\cup\mathcal{G} \\
        B^{\mathrm{pf}}_{ii}(=\mathrm{const.})\quad & i=j\in\mathcal{L}                               \\
        B^{\mathrm{pf}}_{ij}(=\mathrm{const.})\quad & i\neq j
    \end{cases}
\end{equation}
where $B^{\mathrm{pf}}$ is the admittance matrix correpsponding to the power flow calculation, and $b_i$ denotes the internal admittance of power source $i$.
We have $b_i<0$ for inductive sources.

\begin{remark}
    \label{rmk:size-change-issues}
    The variable $\hat{P}$ in constraint \eqref{eq:P-x} is not exactly the same as $P$ since the size of $\hat{P}$ has to shrink when some IBRs switch to GFM mode, otherwise the inverse operation in \eqref{eq:gOSCR-constraint} is not well-defined.
    It is similar for the decision variable $B$ defined in \eqref{eq:B-def} since the partition of $B^{\mathrm{sys}}$ also depends on the GFL/GFM mode switching.
    However, we temporarily leave the size-change issue aside for clarity and will address them in Section \ref{sec:reformulation}.
    The subscript $t$ is omitted here for brevity.
\end{remark}

\subsection{Equivalent Linearization of the Operational Constraints\label{sec:formulation-linearization}}

The operational constraints formulated in the previous subsection involve bilinear terms and matrix inverse operations due to the mode switching decision variables (and on/off decision variables) $x_{i,t}$.
In this subsection, we introduce equivalent linearization techniques to handle these bilinear terms.

First, we linearize the bilinear term $\widetilde{p_{i,t}}\delta_{i,t}$ in energy storage systems' operational constraint \eqref{eq:cha-dis}.
According to \eqref{eq:maximum-power}, it is equivalent to linearize the bilinear term $x_{i,t}\delta_{i,t}$, which can be achieved by introducing an auxiliary continuous variable $w_{i,t}$.
The linearization is shown in \eqref{eq:x-delta}.
\begin{equation}
    \label{eq:x-delta}
    \begin{cases}
        w_{i,t}=x_{i,t}\delta_{i,t} \\
        x_{i,t},\delta_{i,t}\in\set{0,1}
    \end{cases}\hspace{-8pt}
    \iff \begin{cases}
        w_{i,t}\leq x_{i,t}                    \\
        0\leq w_{i,t}\leq \delta_{i,t}         \\
        x_{i,t} + \delta_{i,t} - w_{i,t}\leq 1 \\
        x_{i,t},\delta_{i,t}\in\set{0,1}
    \end{cases}
\end{equation}

Then, we introduce an auxiliary continuous variable $\rho_{i,t}$ linearize the bilinear terms $x_{i,t}p_{i,t}$ in \eqref{eq:P-x}.
\begin{equation}
    \label{eq:linaerize-P}
    \begin{cases}
        \rho_{i,t}=x_{i,t}p_{i,t} \\
        x_{i,t}\in\set{0,1}
    \end{cases}\hspace{-8pt}
    \!\iff \!\begin{cases}
        \rho_{i,t}\leq p_{i}^{\max}                          \\
        \rho_{i,t}\geq -p_i^{\max}                           \\
        \rho_{i,t}\leq p_{i,t} + p_i^{\max}(1\! -\! x_{i,t}) \\
        \rho_{i,t}\geq p_{i,t} - p_i^{\max}(1\! -\! x_{i,t}) \\
        x_{i,t}\in\set{0,1}
    \end{cases}
\end{equation}

Regarding the matrix inverse operation in \eqref{eq:B-def}, we introduce an auxiliary matrix variable $A_t=[A_{ij,t}]_{i,j\in\mathcal{E}\cup\mathcal{R}\cup\mathcal{G}}$ to represent this inverse result at time $t$.
Combining the definition of matrix inverse and \eqref{eq:Bsys-x}, we formulate the following equivalent constraints that $A$ must satisfy, where the indices $i,j,k$ all tranverse $\mathcal{E}\cup\mathcal{R}\cup\mathcal{G}$.
\begin{equation}
    \begin{cases}
        -b_{i}x_{i,t}A_{ii,t}+\sum_{k} B_{ik}^{\mathrm{pf}}A_{kj,t} = 1, & i=j     \\
        -b_{i}x_{i,t}A_{ij,t}+\sum_{k} B_{ik}^{\mathrm{pf}}A_{kj,t} = 0, & i\neq j \\
    \end{cases}
\end{equation}
The bilinear terms $x_{i,t}A_{ij,t}$ can be linearized by introducing auxiliary continuous variables $z_{ij,t}$, as shown in \eqref{eq:x-A}.
\begin{equation}
    \label{eq:x-A}
    \begin{cases}
        z_{ij,t}=x_{i,t}A_{ij,t} \\
        x_{i,t}\in\set{0,1}
    \end{cases}\hspace{-15pt}
    \iff
    \hspace{-5pt}
    \begin{cases}
        z_{ij,t} \leq Mx_{i,t}                  \\
        z_{ij,t} \geq -Mx_{i,t}                 \\
        z_{ij,t} \leq A_{ij,t} + M(1 - x_{i,t}) \\
        z_{ij,t} \geq A_{ij,t} - M(1 - x_{i,t}) \\
        x_{i,t}\in\set{0,1}
    \end{cases}
\end{equation}
where $M\geq\max_{i,j,t}\abs{A_{ij,t}}$ is a large positive constant.

\begin{remark}
    (1) The linearizations are exact and do not bring any approximation error.
    (2) The linearization techniques introduced in this subsection do not introduce additional binary decision variables except for the original ones.
    (3) The value of $M$ can be easily estimated and set in practical applications.
\end{remark}

\section{Integrating the System Strength Constraint in the Scheduling Problem\label{sec:reformulation}}

In this section, we integrate the system strength constraint into the scheduling model.
First, we theoretically show that the system strength constraint \eqref{eq:gOSCR-constraint} can be equivalently reformulated as a LMI \eqref{eq:final-lmi} where the size change issues and non-convex issues are both addressed.
Then, we convert the original non-convex system strength-constrained scheduling problem into a MISDP problem by incorporating the reformulated system strength constraint along with the operational constraints derived in Section \ref{sec:formulation}.
Finally, we further provide a Rayleigh Cut method to solve the converted system strength-constrained scheduling problem.

\subsection{Exact Reformulation of the System Strength Constraint\label{sec:reformulation-exact}}

As explained in Remark~\ref{rmk:size-change-issues}, the decision variables $P$ and $B$ in \eqref{eq:gOSCR-constraint} have size-change issues due to the mode switching decision variables $x_{i,t}$.
In detail, the size of $P$ and $B$ depends on the index set $\mathcal{I}_t$ that is a function of decision variables $x_{i,t}$:
\begin{equation}
    \mathcal{I}_t = \set{i\in\mathcal{E}\cup\mathcal{R}\given x_{i,t}=0}.
\end{equation}
Such size-change issues make the system strength constraint difficult to handle in optimization problems because the size of decision variables is unknown before solving the optimization problems and even varies for different feasible solutions.
In addition, the system strength constraint \eqref{eq:gOSCR-constraint} is non-convex and has no explicit expression due to minimum positive eigenvalue operation.

To address these challenges, we show the following theorems to reformulate the system strength constraint \eqref{eq:gOSCR-constraint} into a LMI \eqref{eq:final-lmi} with fixed size.
The reformulation is exact and does not introduce any approximation error, providing a solid theoretical foundation for integrating the system strength constraint into scheduling problems.

\begin{theorem}
    \label{thm:1}
    Let $P$ and $B$ be the same as those defined in \eqref{eq:P-def} and \eqref{eq:B-def}, respectively.
    The system strength constraint \eqref{eq:gOSCR-constraint} is equivalent to the following LMI:
    \begin{equation}
        \label{eq:system_strength_lmi}
        B - \gamma_0 P \succeq 0.
    \end{equation}
\end{theorem}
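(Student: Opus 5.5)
The plan is to convert the non-symmetric eigenvalue condition on $P^{-1}B$ into a symmetric one by a congruence transformation, and then read off the LMI \eqref{eq:system_strength_lmi} from a bound on a largest eigenvalue. Two standing facts are needed, and I would state them explicitly at the start.

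First, $B$ is symmetric positive definite. $B^{\mathrm{sys}}$ is symmetric with the sign convention of the Remark above. It is positive definite because the grounded internal admittances of online synchronous generators and GFM IBRs, under the diagonal update \eqref{eq:Bsys-x} with $b_i<0$, make it a nonsingular, weakly diagonally dominant matrix with positive diagonal. A Schur complement \eqref{eq:B-def} of a positive definite matrix is again positive definite, so $B \succ 0$.

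Second, $P$ is diagonal with real entries $p_{i,t}\neq 0$, so that $P^{-1}$ exists. These entries may be negative, since energy storage can be charging. We also take $\gamma_0>0$.

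With $B\succ0$, the key step is the change of variables $C \coloneqq B^{-1/2} P B^{-1/2}$, which is real symmetric. Note that $P^{-1}B = B^{-1/2}\,(B^{-1/2}P^{-1}B^{1/2})\,\cdots$ is similar to $B^{1/2}P^{-1}B^{1/2} = C^{-1}$. Hence the spectrum of $P^{-1}B$ is real and equals $\{1/\mu : \mu\in\lambda(C)\}$. Since the map $\mu\mapsto 1/\mu$ preserves sign, the positive eigenvalues of $P^{-1}B$ correspond exactly to the reciprocals of the positive eigenvalues of $C$. It follows that
\[
\min\lambda^+\left(P^{-1}B\right)\ge\gamma_0 \iff \mu\le 1/\gamma_0 \text{ for every positive } \mu\in\lambda(C).
\]
Because $1/\gamma_0>0$, the nonpositive eigenvalues of $C$ satisfy this bound automatically. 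The condition is therefore equivalent to $\lambda_{\max}(C)\le 1/\gamma_0$, that is, to $I-\gamma_0 C\succeq0$. Applying the congruence by $B^{1/2}$, which preserves semidefiniteness by Sylvester's law of inertia, turns this into $B-\gamma_0P\succeq0$. Every step in this chain is an equivalence, which gives the theorem.

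The main obstacle is making the first step rigorous. We must justify $B\succ0$ for every admissible commitment and mode pattern. We must also fix the convention for the degenerate case in which $P^{-1}B$ has no positive eigenvalue, which happens when all GFL IBRs are charging. In that case I would define $\min\emptyset=+\infty$, so that the gOSCR constraint holds. The LMI is consistent with this convention, since $P\preceq0$ gives $B-\gamma_0P\succeq B\succ0$.

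For positive definiteness, I would first try the diagonal-dominance argument above, together with the connectivity of the network to at least one voltage source. If the paper intends $B$ to be only positive semidefinite, the same argument would need to be carried out on the range of $B$, and I would flag that case as requiring an extra assumption.
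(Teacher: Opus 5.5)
Your proof is correct. Its core is exactly the paper's argument for the ``$\implies$'' direction: the similarity of $P^{-1}B$ to $B^{1/2}P^{-1}B^{1/2}=C^{-1}$, reciprocal eigenvalues, and congruence by $B^{1/2}$. One small fix: your displayed factorization should read $P^{-1}B=B^{-1/2}(B^{1/2}P^{-1}B^{1/2})B^{1/2}$.

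The difference is in the converse. Rather than reversing the chain, the paper argues variationally. It shows that $B-\gamma_0P\succeq0$ gives $\gamma_0\le\xi^{\T}B\xi/\xi^{\T}P\xi$ for every unit $\xi$ with $\xi^{\T}P\xi>0$. It then takes $\xi=\zeta$, an eigenvector of $P^{-1}B$ for $\gamma=\min\lambda^+(P^{-1}B)$. This $\zeta$ satisfies $\zeta^{\T}P\zeta=\gamma^{-1}\zeta^{\T}B\zeta>0$, so it yields $\gamma_0\le\gamma$.

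Your single chain of equivalences is shorter, and it makes explicit a step the paper passes over. Since $P$ may be indefinite (charging storage), $C$ can have nonpositive eigenvalues. One must note that these satisfy $\mu\le1/\gamma_0$ trivially before $\min\lambda^+$ reduces to $\lambda_{\max}(C)\le1/\gamma_0$. Your argument also handles the all-charging case $\lambda^+(P^{-1}B)=\emptyset$. The paper's converse tacitly excludes that case by choosing an eigenvector for $\gamma$.

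The paper's route, for its part, avoids $B^{1/2}$ in the converse. It works directly with the quadratic-form inequality $\xi^{\T}B\xi\ge\gamma_0\xi^{\T}P\xi$, which is what the later Rayleigh Cut method is built on. Both arguments rest on $B\succ0$ and $P$ nonsingular. The paper simply assumes these, whereas you justify them via diagonal dominance and network connectivity.
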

\begin{proof}
    ``$\implies$'':
    Since $B\succ 0$, there exists $B^{\frac{1}{2}}\succ 0$ such that $B=B^{\frac{1}{2}}B^{\frac{1}{2}}$.
    By multiplying $B^{-\frac{1}{2}}$ and $B^{\frac{1}{2}}$ on the left and right sides of $P^{-1}B$, respectively, we have
    \begin{equation}
        \lambda(P^{-1}B) = \lambda(B^{\frac{1}{2}}P^{-1}B^{\frac{1}{2}}).
    \end{equation}
    Since the eigenvalues of a non-singular matrix is the reciprocal of those of its inverse, we see that \eqref{eq:gOSCR-constraint} is equivalent to
    \begin{equation}
        I-\gamma_0B^{-\frac{1}{2}}PB^{-\frac{1}{2}}\succeq 0.
    \end{equation}
    By multiplying $B^{\frac{1}{2}}\succ 0$ on both sides, we obtain \eqref{eq:system_strength_lmi}.

    ``$\impliedby$'':
    For all $\xi\neq 0$, the condition $B-\gamma_0 P\succeq 0$ implies $        \xi^{\T} B \xi \geq \gamma_0 \xi^{\T} P \xi$, which further gives
    \begin{equation}
        \gamma_0\leq\min_{\xi\in\Xi}\frac{\xi^{\T} B \xi}{\xi^{\T} P \xi},\quad \Xi\coloneqq\set{\xi\given \xi^{\T} P \xi>0, \norm{\xi}_2=1}.
    \end{equation}
    On the other hand, let $\gamma:=\min\lambda^+\left(P^{-1}B\right)$, there exists $\zeta\neq 0$ with $\norm{\zeta}_2=1$ such that $P^{-1}B\zeta=\gamma\zeta$, which implies
    \begin{equation}
        \zeta^{\T} P\zeta = \frac{1}{\gamma}\zeta^{\T} B\zeta>0\implies \zeta\in\Xi,
    \end{equation}
    where the last inequality holds since $\gamma>0$ and $B\succ 0$.
    Therefore,
    \begin{equation}
        \gamma_0\leq\min_{\xi\in\Xi}\frac{\xi^{\T} B \xi}{\xi^{\T} P \xi}\leq \frac{\zeta^{\T} B \zeta}{\zeta^{\T} P \zeta}=\gamma,
    \end{equation}
    which implies \eqref{eq:gOSCR-constraint}.
\end{proof}

\begin{theorem}
    \label{thm:2}
    Let $P,B,B^\mathrm{sys}$ be the same as those defined in \eqref{eq:P-def}-\eqref{eq:B-def} and \eqref{eq:Bsys-x}, respectively.
    For a given $\gamma_0>0$, the LMI $B-\gamma_0 P \succeq 0$ is equivalent to the following LMI:
    \begin{equation}
        \label{eq:final-lmi-0}
        \begin{bmatrix}
            B^{\mathrm{sys}}_{\mathcal{I}_t\mathcal{I}_t} & B^{\mathrm{sys}}_{\mathcal{I}_t\mathcal{J}_t} \\[2pt]
            B^{\mathrm{sys}}_{\mathcal{J}_t\mathcal{I}_t} & B^{\mathrm{sys}}_{\mathcal{J}_t\mathcal{J}_t}
        \end{bmatrix} - \gamma_0 \begin{bmatrix}
            P &   \\
              & 0
        \end{bmatrix}\succeq 0.
    \end{equation}
\end{theorem}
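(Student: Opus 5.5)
The plan is to use a Schur-complement congruence. Write $M$ for the block matrix on the left-hand side of \eqref{eq:final-lmi-0}. The $(1,1)$ block of $M$ is $B^{\mathrm{sys}}_{\mathcal{I}_t\mathcal{I}_t}-\gamma_0 P$, and all other blocks are unchanged from $B^{\mathrm{sys}}$.

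The key standing fact is that $B^{\mathrm{sys}}_{\mathcal{J}_t\mathcal{J}_t}\succ 0$. This is already implicit in \eqref{eq:B-def}, since the inverse is used there. Physically, $B^{\mathrm{sys}}$ with the internal admittances of online synchronous generators and GFM IBRs added ($-b_i x_{i,t}>0$ on the diagonal) is a grounded Laplacian-type matrix. Its principal submatrices are therefore positive definite. I will state this explicitly as an assumption consistent with the setting of Theorem~\ref{thm:1}, where $B\succ 0$ was used.

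With $S\coloneqq B^{\mathrm{sys}}_{\mathcal{J}_t\mathcal{J}_t}$ and $C\coloneqq B^{\mathrm{sys}}_{\mathcal{I}_t\mathcal{J}_t}$, I would introduce the nonsingular matrix
\begin{equation*}
    T=\begin{bmatrix} I & 0\\ -S^{-1}C^{\T} & I\end{bmatrix}.
\end{equation*}
A direct block multiplication should give
\begin{equation*}
    T^{\T} M T=\begin{bmatrix} B^{\mathrm{sys}}_{\mathcal{I}_t\mathcal{I}_t}-CS^{-1}C^{\T}-\gamma_0P & 0\\ 0 & S\end{bmatrix}=\begin{bmatrix} B-\gamma_0 P & 0\\ 0 & S\end{bmatrix}.
\end{equation*}
This uses the symmetry $B^{\mathrm{sys}}_{\mathcal{J}_t\mathcal{I}_t}=C^{\T}$ and the definition \eqref{eq:B-def}. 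The point is that the $-\gamma_0 P$ term sits only in the $(1,1)$ block, so the Schur complement of $M$ with respect to $S$ is exactly $B-\gamma_0P$.

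By Sylvester's law of inertia, $M\succeq 0$ if and only if $T^{\T}MT\succeq 0$. Since $S\succ 0$, the block-diagonal matrix is positive semidefinite if and only if $B-\gamma_0P\succeq 0$. This yields both directions of the equivalence at once. Equivalently, one can cite the standard Schur-complement lemma: for $S\succ 0$, $\left[\begin{smallmatrix} X & C\\ C^{\T} & S\end{smallmatrix}\right]\succeq 0 \iff X-CS^{-1}C^{\T}\succeq0$.

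The only real subtlety is justifying $S\succ 0$ for every admissible mode choice, because $\mathcal{J}_t$ varies with $x_{i,t}$. I would argue it from the structure of \eqref{eq:Bsys-x}. $B^{\mathrm{sys}}$ is symmetric, has nonpositive off-diagonal entries, and is weakly diagonally dominant, with strict dominance at buses where a source admittance is added. As long as each connected component of the network contains at least one online SG or GFM IBR, it is an irreducibly diagonally dominant M-matrix and hence positive definite. Its principal submatrix $S$ inherits positive definiteness by Cauchy interlacing.

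The degenerate case $\mathcal{I}_t=\emptyset$ must also be handled. There, \eqref{eq:final-lmi-0} reduces to $B^{\mathrm{sys}}\succeq 0$, which holds by the same argument, and the constraint is vacuous.
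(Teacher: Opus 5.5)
Your proposal is correct and is essentially the paper's Schur-complement argument. The paper proves the two directions separately with quadratic forms: for ``$\impliedby$'' it substitutes $\zeta=-(B^{\mathrm{sys}}_{\mathcal{J}_t\mathcal{J}_t})^{-1}B^{\mathrm{sys}}_{\mathcal{J}_t\mathcal{I}_t}\xi$, and for ``$\implies$'' it invokes the Schur complement condition, which your block congruence by $T$ does in one step. Your explicit justification of $B^{\mathrm{sys}}_{\mathcal{J}_t\mathcal{J}_t}\succ 0$ is a useful addition, since the paper's ``$\implies$'' direction relies on it without stating it.
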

\begin{proof}
    ``$\implies$'':
    The condition $B-\gamma_0 P\succeq 0$ implies that
    \begin{equation}
        \label{eq:thm2-1}
        \xi^{\T}\left(B^{\mathrm{sys}}_{\mathcal{I}_t\mathcal{I}_t} - \gamma_0 P\right)\xi \geq \xi^{\T}B^{\mathrm{sys}}_{\mathcal{I}_t\mathcal{J}_t}\left(B^{\mathrm{sys}}_{\mathcal{J}_t\mathcal{J}_t}\right)^{-1}B^{\mathrm{sys}}_{\mathcal{J}_t\mathcal{I}_t}\xi,
    \end{equation}
    for all $\xi$.
    Then, for all $\xi,\zeta$, we have
    \begin{equation}
        \begin{split}
             & \begin{bmatrix}
                   \xi \\
                   \zeta
               \end{bmatrix}^{\T}
            \begin{bmatrix}
                B^{\mathrm{sys}}_{\mathcal{I}_t\mathcal{I}_t} - \gamma_0 P & B^{\mathrm{sys}}_{\mathcal{I}_t\mathcal{J}_t} \\
                B^{\mathrm{sys}}_{\mathcal{J}_t\mathcal{I}_t}              & B^{\mathrm{sys}}_{\mathcal{J}_t\mathcal{J}_t}
            \end{bmatrix}\begin{bmatrix}
                             \xi \\
                             \zeta
                         \end{bmatrix} \\
             & =        \begin{bmatrix}
                            \xi \\
                            \zeta
                        \end{bmatrix}^{\T}
            \begin{bmatrix}
                0                                             & B^{\mathrm{sys}}_{\mathcal{I}_t\mathcal{J}_t} \\
                B^{\mathrm{sys}}_{\mathcal{J}_t\mathcal{I}_t} & B^{\mathrm{sys}}_{\mathcal{J}_t\mathcal{J}_t}
            \end{bmatrix}\begin{bmatrix}
                             \xi \\
                             \zeta
                         \end{bmatrix}    +       \xi^{\T}B^{\mathrm{sys}}_{\mathcal{I}_t\mathcal{I}_t}\xi  - \gamma_0 \xi^{\T}P\xi              \\
             & \geq \begin{bmatrix}
                        \xi \\
                        \zeta
                    \end{bmatrix}^{\T}
            \begin{bmatrix}
                B^{\mathrm{sys}}_{\mathcal{I}_t\mathcal{J}_t}\left(B^{\mathrm{sys}}_{\mathcal{J}_t\mathcal{J}_t}\right)^{-1}B^{\mathrm{sys}}_{\mathcal{J}_t\mathcal{I}_t} & B^{\mathrm{sys}}_{\mathcal{I}_t\mathcal{J}_t} \\
                B^{\mathrm{sys}}_{\mathcal{J}_t\mathcal{I}_t}                                                                                                             & B^{\mathrm{sys}}_{\mathcal{J}_t\mathcal{J}_t}
            \end{bmatrix}\begin{bmatrix}
                             \xi \\
                             \zeta
                         \end{bmatrix}\geq 0,
        \end{split}
    \end{equation}
    in which the first inequality holds by substituting \eqref{eq:thm2-1} and the second inequality holds due to the Schur complement condition.

    ``$\impliedby$'':
    For all $\xi$, let $\zeta =-\left(B^{\mathrm{sys}}_{\mathcal{J}_t\mathcal{J}_t}\right)^{-1}B^{\mathrm{sys}}_{\mathcal{J}_t\mathcal{I}_t}\xi$.
    We have
    \begin{equation}
        \begin{split}
             & 0 \leq  \begin{bmatrix}
                           \xi \\
                           \zeta
                       \end{bmatrix}^{\T}
            \left(\begin{bmatrix}
                          B^{\mathrm{sys}}_{\mathcal{I}_t\mathcal{I}_t} & B^{\mathrm{sys}}_{\mathcal{I}_t\mathcal{J}_t} \\[2pt]
                          B^{\mathrm{sys}}_{\mathcal{J}_t\mathcal{I}_t} & B^{\mathrm{sys}}_{\mathcal{J}_t\mathcal{J}_t}
                      \end{bmatrix} - \gamma_0 \begin{bmatrix}
                                                   P &   \\
                                                     & 0
                                               \end{bmatrix}\right)\begin{bmatrix}
                                                                   \xi \\
                                                                   \zeta
                                                               \end{bmatrix}                                                                                      \\
             & =\xi^{\T}\left(B^{\mathrm{sys}}_{\mathcal{I}_t\mathcal{I}_t} - \gamma_0 P\right)\xi - \xi^{\T}B^{\mathrm{sys}}_{\mathcal{I}_t\mathcal{J}_t}\left(B^{\mathrm{sys}}_{\mathcal{J}_t\mathcal{J}_t}\right)^{-1}B^{\mathrm{sys}}_{\mathcal{J}_t\mathcal{I}_t}\xi,
        \end{split}
    \end{equation}
    which implies that $B-\gamma_0 P\succeq 0$.
\end{proof}

\begin{corollary}
    \label{cor:1}
    Denote $\mathcal{I}=\mathcal{E}\cup\mathcal{R}$ and $\mathcal{J}=\mathcal{G}\cup\mathcal{L}$ and let $\hat{P}$ and $B^{\mathrm{sys}}$ be the same as those defined in \eqref{eq:P-x} and \eqref{eq:Bsys-x}, respectively.
    The system strength constraint \eqref{eq:gOSCR-constraint} is equivalent to the following LMI:
    \begin{equation}
        \label{eq:final-lmi}
        \hat{B} - \gamma_0 \hat{P} \succeq 0,
    \end{equation}
    where $\hat{B}:=B^{\mathrm{sys}}_{\mathcal{II}} -B^{\mathrm{sys}}_{\mathcal{IJ}}\left(B^{\mathrm{sys}}_{\mathcal{JJ}}\right)^{-1}B^{\mathrm{sys}}_{\mathcal{JI}}$.
\end{corollary}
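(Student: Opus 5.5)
By Theorem~\ref{thm:1} and Theorem~\ref{thm:2}, the system strength constraint \eqref{eq:gOSCR-constraint} is equivalent to the LMI \eqref{eq:final-lmi-0}. That LMI is posed on the full index set $\mathcal{N}=\mathcal{I}_t\cup\mathcal{J}_t$, so its size no longer depends on the modes. The remaining work is to move from this full-size LMI to the fixed partition $\mathcal{N}=\mathcal{I}\cup\mathcal{J}$ with $\mathcal{I}=\mathcal{E}\cup\mathcal{R}$.

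\textbf{Step 1: re-index the matrix in \eqref{eq:final-lmi-0}.} Since $\mathcal{I}_t\subseteq\mathcal{I}$ and $\mathcal{J}\subseteq\mathcal{J}_t$, a symmetric permutation reorders the rows and columns as $(\mathcal{I},\mathcal{J})$. Such a permutation is a congruence, so it preserves positive semidefiniteness. The matrix $\mathrm{diag}(P,0)$ then becomes $\mathrm{diag}(\hat{P},0)$. This holds because, by \eqref{eq:P-x}, $\hat{P}$ has entry $p_{i,t}$ at $i\in\mathcal{I}_t$ and zero at the GFM buses $\mathcal{I}\setminus\mathcal{I}_t$. Hence \eqref{eq:final-lmi-0} is equivalent to
\begin{equation*}
\begin{bmatrix} B^{\mathrm{sys}}_{\mathcal{II}}-\gamma_0\hat{P} & B^{\mathrm{sys}}_{\mathcal{IJ}}\\ B^{\mathrm{sys}}_{\mathcal{JI}} & B^{\mathrm{sys}}_{\mathcal{JJ}}\end{bmatrix}\succeq 0 .
\end{equation*}

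\textbf{Step 2: take a Schur complement onto $\mathcal{I}$.} I would repeat the argument of Theorem~\ref{thm:2}, now with $\mathcal{I},\mathcal{J}$ in place of $\mathcal{I}_t,\mathcal{J}_t$ and $\hat{P}$ in place of $P$. That proof only used two facts: $B^{\mathrm{sys}}_{\mathcal{J}_t\mathcal{J}_t}$ is invertible (indeed positive definite), and the full matrix $B^{\mathrm{sys}}$ is positive semidefinite, which is the Schur-complement inequality in its forward direction.

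For the forward direction, assume $\hat{B}-\gamma_0\hat{P}\succeq0$ and evaluate the quadratic form at $(\xi,\zeta)$. For the reverse direction, choose $\zeta=-(B^{\mathrm{sys}}_{\mathcal{JJ}})^{-1}B^{\mathrm{sys}}_{\mathcal{JI}}\xi$. Together these show the block LMI above is equivalent to $\hat{B}-\gamma_0\hat{P}\succeq0$.

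\textbf{Main obstacle: the hypotheses on $B^{\mathrm{sys}}_{\mathcal{JJ}}$.} The key check is that $B^{\mathrm{sys}}_{\mathcal{JJ}}$ is positive definite for every admissible mode and commitment pattern. It is a principal submatrix of $B^{\mathrm{sys}}_{\mathcal{J}_t\mathcal{J}_t}$. Because $\mathcal{J}\subseteq\mathcal{J}_t$, it inherits positive definiteness, which gives invertibility. This holds under the standing assumption, already used in Theorems~\ref{thm:1}--\ref{thm:2}, that $B^{\mathrm{sys}}_{\mathcal{J}_t\mathcal{J}_t}\succ0$ and $B\succ0$. I would state that assumption explicitly, since the $-b_i x_{i,t}>0$ shunts only increase the diagonal of a Laplacian-like $B^{\mathrm{pf}}$.

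An alternative to Step 2 is to observe that $\hat{B}$, restricted to $\mathcal{I}_t$ after eliminating the GFM buses, reproduces $B$ by the quotient property of Schur complements. The zero rows of $\hat{P}$ at $\mathcal{I}\setminus\mathcal{I}_t$ then make $\hat{B}-\gamma_0\hat{P}\succeq0$ equivalent to $B-\gamma_0P\succeq0$ by Theorem~\ref{thm:2} applied within $\hat{B}$.
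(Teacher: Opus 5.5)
Your proposal is correct and follows the paper's route. The paper's entire proof is ``combine Theorems~\ref{thm:1}--\ref{thm:2} with Kron reduction,'' and your Steps~1--2 spell this out: you permute $\mathcal{I}_t\cup\mathcal{J}_t$ into $\mathcal{I}\cup\mathcal{J}$ so that $\diag(P,0)$ becomes $\diag(\hat{P},0)$, then take the Schur complement onto $\mathcal{I}$ using $B^{\mathrm{sys}}_{\mathcal{JJ}}\succ 0$ as a principal submatrix of $B^{\mathrm{sys}}_{\mathcal{J}_t\mathcal{J}_t}$. One small correction: the forward Schur-complement step needs only $B^{\mathrm{sys}}_{\mathcal{JJ}}\succ 0$, which makes the block matrix with upper-left block $B^{\mathrm{sys}}_{\mathcal{IJ}}(B^{\mathrm{sys}}_{\mathcal{JJ}})^{-1}B^{\mathrm{sys}}_{\mathcal{JI}}$ positive semidefinite; it does not need positive semidefiniteness of all of $B^{\mathrm{sys}}$.
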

\begin{proof}
    Combining Theorems~\ref{thm:1}-\ref{thm:2} and the Kron reduction, we obtain the conclusion.
\end{proof}

\begin{remark}
    The LMI \eqref{eq:final-lmi} has a fixed size $n_\mathrm{E}+n_\mathrm{R}$ regardless of the mode switching decision variables $x_{i,t}$.
    In addition, its expression is explicit since both $\hat{P}$ and $\hat{B}$ can be directly constructed by \eqref{eq:P-x}-\eqref{eq:Bsys-x} and \eqref{eq:x-A} without any inverse operation.
    Thus, the size-change issues and non-convex issues in the original system strength constraint \eqref{eq:gOSCR-constraint} are both addressed in \eqref{eq:final-lmi}, facilitating its integration into scheduling problems.
\end{remark}

Therefore, the original system strength constraint \eqref{eq:gOSCR-constraint} is equivalently reformulated as the LMI \eqref{eq:final-lmi}, which has a fixed size and explicit expression.

\begin{remark}
    \jxrevise{To the best of our knowledge, the system-strength-constrained scheduling model proposed in this paper is the first tractable formulation that captures the complex coupling between system strength constraints and operational decisions, such as GFM/GFL mode switching, without introducing additional approximation errors.
        This tractability is enabled by the theoretical reformulation of the system strength constraint established in Theorems~\ref{thm:1}-\ref{thm:2}.
        Therefore, the proposed reformulation constitutes a major theoretical contribution of this work, rather than a minor incremental extension of existing studies.}
\end{remark}

\subsection{The Complete Scheduling Model\label{sec:reformulation-scheduling}}

We now specify the complete system strength-constrained scheduling model by incorporating the reformulated system strength constraint \eqref{eq:final-lmi} into the scheduling problem.
The objective is to minimize the total operating cost, which consists of three parts: the operating cost of synchronous generators $C_1$, the operating cost of renewable energy sources $C_2$, and the operating cost of energy storage systems $C_3$.
\begin{subequations}
    \begin{align}
        C_1 & = \sum_{i\in\mathcal{G},t\in\mathcal{T}} c^{\mathrm{gen}}_{i,t} p_{i,t} + c^{\mathrm{fix}}_i x_{i,t} + c^{\mathrm{up}}_i su_{i,t} + c^{\mathrm{dn}}_i sd_{i,t}          \\
        C_2 & = \sum_{i\in\mathcal{R},t\in\mathcal{T}} c^{\mathrm{gen}}_{i,t} p_{i,t} + c^{\mathrm{cur}}_{i,t} (\widetilde{p_{i,t}} - p_{i,t}) + c^{\mathrm{gfm}}_{i,t} x_{i,t}       \\
        C_3 & = \sum_{i\in\mathcal{E},t\in\mathcal{T}} c^{\mathrm{dis}}_{i,t} p_{i,t}^{\mathrm{dis}} + c^{\mathrm{cha}}_{i,t} p_{i,t}^{\mathrm{cha}} + c^{\mathrm{gfm}}_{i,t} x_{i,t}
    \end{align}
\end{subequations}
Where $su_{i,t}$ and $sd_{i,t}$ denote the startup and shutdown indicators of synchronous generator $i$ at time $t$, and all parameters denoted by $c$ represent the corresponding cost coefficients.
The four terms in $C_1$ denote the variable generation cost, online fixed cost, startup cost, and shutdown cost of synchronous generators, respectively.
The three terms in $C_2$ denote the variable generation cost, curtailment cost, and additional cost for GFM operation of renewable energy sources, respectively.
The three terms in $C_3$ denote the discharging cost, charging cost, and additional cost for GFM operation of energy storage systems, respectively.

The DC power flow equations are used to model the network constraints, as shown in \eqref{eq:dc-power-balance} and \eqref{eq:dc-line-flow}.
\begin{subequations}
    \label{eq:network}
    \begin{align}
        \sum_{j\in\mathcal{N}}B^{\mathrm{pf}}_{ij}\theta_{j,t}  = p_{i,t} - d_{i,t},\quad & \forall i\in\mathcal{N},t\in\mathcal{T} \label{eq:dc-power-balance}                  \\
        \abs*{B^{\mathrm{pf}}_{ij}(\theta_{i,t}-\theta_{j,t})} \leq f_{ij}^{\max},\quad   & \forall (i,j)\in\mathcal{N}\times\mathcal{N},t\in\mathcal{T} \label{eq:dc-line-flow}
    \end{align}
\end{subequations}
Where $\theta_{i,t}$ denotes the voltage angle at bus $i$ and time $t$, $d_{i,t}$ denotes the load demand at bus $i$ and time $t$, and $f_{ij}^{\max}$ denotes the maximum power flow limit of line $(i,j)$.

The operating constraints of synchronous generators are given in classic unit commitment constraints, including minimum technical generation limits, minimum up/down time limits, etc., which are omitted in this paper for brevity.
We use $\text{\emph{UC}}$ to represent these constraints.
The operating constraints of renewable energy sources are given in \eqref{eq:maximum-power} and \eqref{eq:renewable-power}.
The operating constraints of energy storage systems are given in \eqref{eq:maximum-power}, \eqref{eq:energy-storage-output}, \eqref{eq:cha-dis}, \eqref{eq:energy-storage-soc}, and \eqref{eq:x-delta}.
The system strength constraints are given in \eqref{eq:P-x}, \eqref{eq:Bsys-x}, \eqref{eq:x-A}, and \eqref{eq:final-lmi}.

Therefore, the complete scheduling model is formulated as follows:
\begin{subequations}
    \label{eq:complete-scheduling}
    \begin{align}
        \min\quad        & C_1 + C_2 + C_3                                                                                                                                                                 \\
        \text{s.t.}\quad & \eqref{eq:maximum-power}, \eqref{eq:renewable-power}, \eqref{eq:energy-storage-output}, \eqref{eq:cha-dis}, \eqref{eq:energy-storage-soc}, \eqref{eq:x-delta}, \text{\emph{UC}} \\
                         & \eqref{eq:network}                                                                                                                                                              \\
                         & \eqref{eq:P-x}, \eqref{eq:Bsys-x}, \eqref{eq:x-A}, \eqref{eq:final-lmi}
    \end{align}
\end{subequations}

\begin{remark}
    Through rigorous theoretical derivations and proofs, we have successfully converted the original non-convex and non-explicit scheduling problem into a MISDP problem \eqref{eq:complete-scheduling} without any approximation errors.
\end{remark}

\subsection{Rayleigh Cut Solution Method}\label{sec:reformulation-rayleigh}

The MISDP problems is still intractable and not solver-friendly in general.
Therefore, we further provide a Rayleigh Cut method to solve the MISDP problem \eqref{eq:complete-scheduling} in an iterative manner, which guarantees feasibility and optimality.

First, we show that the system strength constraint \eqref{eq:final-lmi} is equivalent to a set of linear inequalities \eqref{eq:Rayleigh-cut}.
Each constraint in \eqref{eq:Rayleigh-cut} can be viewed as a cutting plane that removes infeasible solutions violating \eqref{eq:final-lmi}, thereby named Rayleigh Cut.

\begin{theorem}
    \label{thm:Rayleigh}
    The LMI \eqref{eq:final-lmi} is equivalent to the following set of linear inequalities:
    \begin{equation}
        \label{eq:Rayleigh-cut}
        u^{\T}\left(\hat{B}-\gamma_0\hat{P}\right)u\geq 0,\ \forall u\in\bar{\mathcal{U}},
    \end{equation}
    where $\bar{\mathcal{U}}$ is defined as
    \begin{equation}
        \bar{\mathcal{U}}:=\set*{u\given u=\psi(\hat{B}-\gamma_0\hat{P})},
    \end{equation}
    in which $\psi(\cdot)$ denotes the unit-length eigenvector corresponding to the minimum eigenvalue of its matrix argument.
\end{theorem}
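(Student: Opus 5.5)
The plan is to prove the two directions separately. Throughout, the decision variables $x_{i,t}$, $p_{i,t}$, and hence $\hat{B}$ and $\hat{P}$, are treated as given. I write $M:=\hat{B}-\gamma_0\hat{P}$, which is symmetric because $\hat{B}$ is a Schur complement of the symmetric $B^{\mathrm{sys}}$ and $\hat{P}$ is diagonal. The set $\bar{\mathcal{U}}$ should be read as the collection of unit minimum-eigenvalue eigenvectors produced at the points where $M$ is evaluated. Consequently, the statement amounts to the claim that $M\succeq 0$ holds if and only if the Rayleigh quotient of $M$ at its own minimizing direction is nonnegative. Each inequality in \eqref{eq:Rayleigh-cut} is linear in the entries of $\hat{B}$ and $\hat{P}$ once $u$ is fixed. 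By \eqref{eq:P-x}, \eqref{eq:x-A} and the linearizations of Section~\ref{sec:formulation-linearization}, these entries are affine in the lifted variables $\rho_{i,t}$, $z_{ij,t}$ and the like, which justifies calling the inequalities linear.

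For the ``$\implies$'' direction, assume \eqref{eq:final-lmi} holds, so that $M\succeq 0$. By definition, $u^{\T}Mu\ge 0$ for every vector $u$, and in particular for every $u\in\bar{\mathcal{U}}$. This direction is immediate.

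For the ``$\impliedby$'' direction, assume every inequality in \eqref{eq:Rayleigh-cut} holds, including the one for $u=\psi(M)$. Since $M$ is real symmetric, the spectral theorem gives a unit eigenvector $u=\psi(M)$ with $Mu=\lambda_{\min}(M)u$. Therefore $u^{\T}Mu=\lambda_{\min}(M)\norm{u}_2^2=\lambda_{\min}(M)$. The hypothesis gives $\lambda_{\min}(M)\ge 0$, and by the Rayleigh--Ritz characterization $\lambda_{\min}(M)=\min_{\norm{v}_2=1}v^{\T}Mv$, it follows that $v^{\T}Mv\ge 0$ for all $v$, that is, $M\succeq 0$. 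If the minimum eigenvalue is repeated, $\psi$ may be taken as any unit vector in the eigenspace and the same computation applies.

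The main obstacle is interpretive rather than technical. Because $\bar{\mathcal{U}}$ is defined through $\psi(\hat{B}-\gamma_0\hat{P})$, it depends on the very point being tested, and the equivalence must be stated pointwise. To make the argument robust, and useful for the cutting-plane algorithm, I will also record a stronger fact: the set of $(\hat{B},\hat{P})$ satisfying \eqref{eq:final-lmi} equals the intersection of the half-spaces $u^{\T}(\hat{B}-\gamma_0\hat{P})u\ge 0$ over all unit vectors $u$. Restricting $u$ to the eigenvectors $\psi(\cdot)$ evaluated at every candidate point, including infeasible iterates, loses no generality. Any infeasible point $M'$ has $\lambda_{\min}(M')<0$, so it is cut off by its own vector $\psi(M')$, while every feasible point satisfies all such cuts by the first direction. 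This combined statement yields both the equivalence and the separation property that the Rayleigh Cut method relies on.
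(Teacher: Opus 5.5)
Your proposal is correct and matches the paper's proof: the forward direction follows directly from the definition of positive semidefiniteness, and the reverse direction evaluates the cut at the unit minimum-eigenvalue eigenvector $u$, giving $u^{\T}(\hat{B}-\gamma_0\hat{P})u=\lambda_{\min}\geq 0$. Your extra remarks are also sound: they read $\bar{\mathcal{U}}$ pointwise and add the half-space/separation view, which spells out what the paper leaves implicit for the Rayleigh Cut algorithm.
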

\begin{proof}
    ``$\implies$'':
    This direction is straightforward by the definition of positive semidefinite matrices.

    ``$\impliedby$'':
    Let $(\mu, u)$ be the minimum eigenpair of $\hat{B}-\gamma_0\hat{P}$ with $\norm{u}_2=1$.
    \begin{equation}
        0\leq u^{\T}\left(\hat{B}-\gamma_0\hat{P}\right)u = \mu \left(u^{\T}u\right) = \mu,
    \end{equation}
    which further implies that $\hat{B}-\gamma_0\hat{P}\succeq 0$.
\end{proof}

Then, we relax the MISDP problem \eqref{eq:complete-scheduling} into a MILP problem by replacing the system strength constraint \eqref{eq:final-lmi} with a part of Rayleigh Cuts, as shown in \eqref{eq:relax-scheduling}, where $\mathcal{U}\subset\bar{\mathcal{U}}$.
\begin{subequations}
    \label{eq:relax-scheduling}
    \begin{align}
        \min\         & C_1 + C_2 + C_3                                                                                                                                                                 \\
        \text{s.t.}\  & \eqref{eq:maximum-power}, \eqref{eq:renewable-power}, \eqref{eq:energy-storage-output}, \eqref{eq:cha-dis}, \eqref{eq:energy-storage-soc}, \eqref{eq:x-delta}, \text{\emph{UC}} \\
                      & \eqref{eq:network}                                                                                                                                                              \\
                      & \eqref{eq:P-x}, \eqref{eq:Bsys-x}, \eqref{eq:x-A}, u^{\T}\left(\hat{B}-\gamma_0\hat{P}\right)u\geq 0,\ \forall u\in \mathcal{U}
    \end{align}
\end{subequations}

\begin{figure}[!t]
    \centering
    \includegraphics[width=0.99\linewidth]{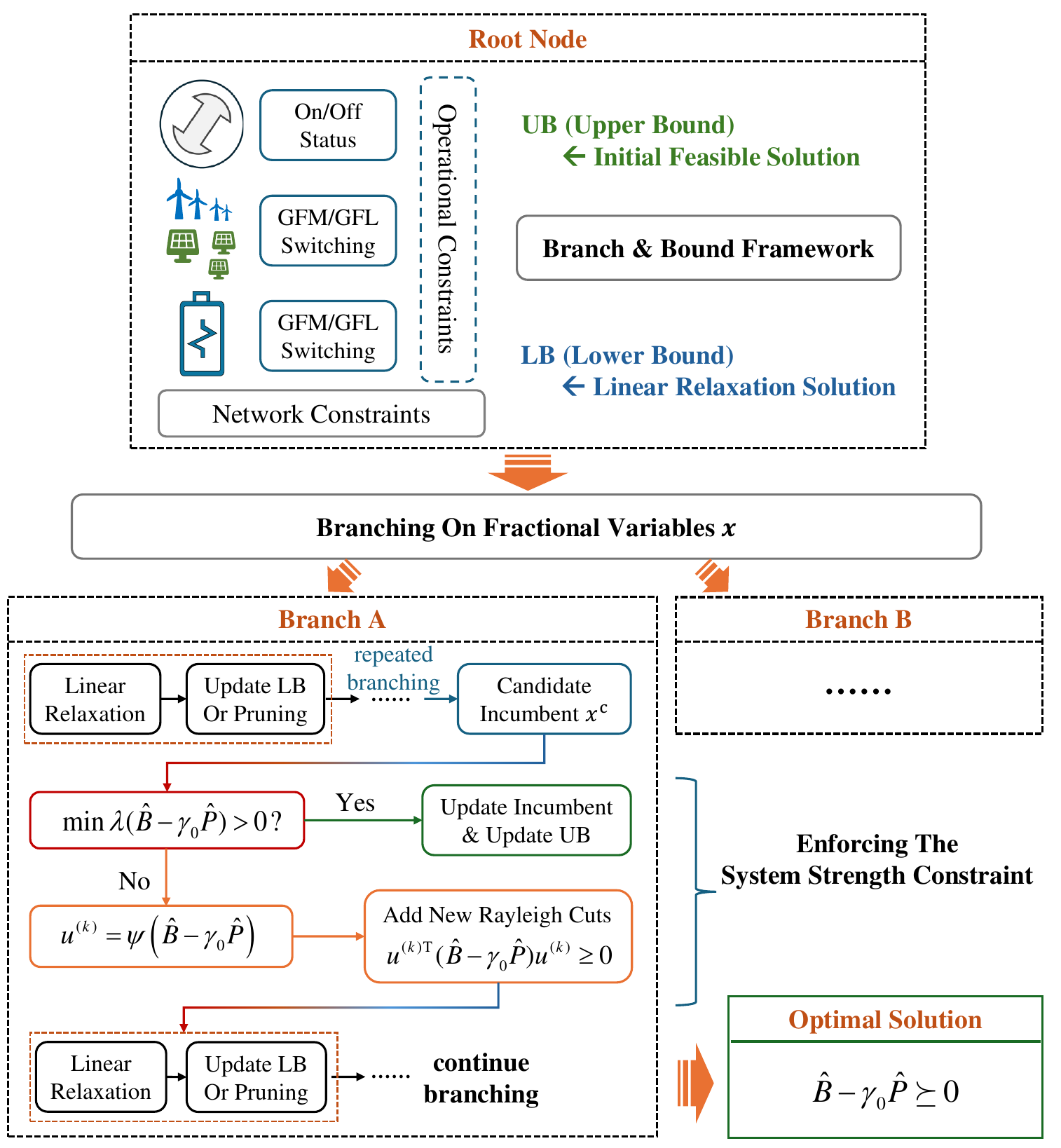}
    \caption{Rayleigh Cut method for solving the system strength-constrained scheduling problem. The branch-and-bound framework can be provided by commercial solvers such as Gurobi.}
    \label{fig:iterative-solution}
\end{figure}

\figref{fig:iterative-solution} illustrates the Rayleigh Cut method for iteratively solving the system strength-constrained scheduling problem within the branch-and-bound framework for MILP problems, which can be provided by commercial solvers such as Gurobi.
The procedure starts with an initial MILP problem with no Rayleigh Cuts, i.e., $\mathcal{U}=\emptyset$.
Every time when we obtain a candidate incumbent solution, we check whether it satisfies the system strength constraint \eqref{eq:final-lmi}.
If not, we compute the minimum eigenvector $u$ of the matrix $\hat{B}-\gamma_0\hat{P}$ based on this candidate incumbent solution and add the corresponding Rayleigh Cut constraints.
This procedure is repeated until a solution satisfying \eqref{eq:final-lmi} is found within the accepted MIP gap tolerance.
Here, we omit the details of the branch-and-bound framework and focus on the specific procedure for adding Rayleigh Cuts when a candidate incumbent solution is obtained, since the branch-and-bound framework is standard and well-known in the optimization community.
In addition, we note that the solution obtained by this method is guaranteed to be optimal for the original MISDP problem \eqref{eq:complete-scheduling}, within the accepted MIP gap tolerance, which is similar to the standard cutting plane method for solving MILP problems.

\section{Case Studies\label{sec:cases}}

\subsection{Basic Information}

The proposed strength-constrained scheduling model and solution method are evaluated on a modified IEEE 118-bus system.
The corresponding one-line diagram is shown in \figref{fig:network}.
The system comprises 118 buses, 186 transmission lines, and 50 power sources.
Among these sources, 32 are wind farms, each with a capacity of 150 MW; 8 are photovoltaic (PV) plants, also rated at 150 MW each; 7 are energy storage systems with a power capacity of 150 MW and an energy capacity of 300 MWh; and 3 are thermal generators with individual capacities of 667 MW, 300 MW, and 300 MW.
The load demand profile and renewable generation forecast for a 24-hour scheduling horizon are illustrated in \figref{fig:total-renewable}, where only the summation among all buses is presented for clarity.
The minimum required gOSCR level is set to $\gamma_0=2.0$ throughout the scheduling horizon.
\jxrevise{The power headroom requirement for GFM-mode IBRs is set to 10\% of the rated power capacity.
    \footnote{This setting is consistent with the guidance from the Australian Energy Market Operator (AEMO) \cite{AEMO2025GFMAccess} and the industry study practice of the Western Electricity Coordinating Council (WECC) \cite{WECC2023GFMInverter}.}
    The SOC reserve requirement for GFM energy storage systems is set to 8\% of the rated energy capacity.
    \footnote{This value is consistent with the operational practice reported for the Hornsdale Power Reserve (HPR) project in South Australia \cite{Aurecon2018HPR}.
        HPR is a Neoen-owned battery energy storage project, supplied by Tesla, with knowledge-sharing and technical reporting supported by the Australian Renewable Energy Agency (ARENA) and independent assessments by Aurecon.}}

\begin{figure}[thbp]
    \centering
    \includegraphics[width=0.99\linewidth]{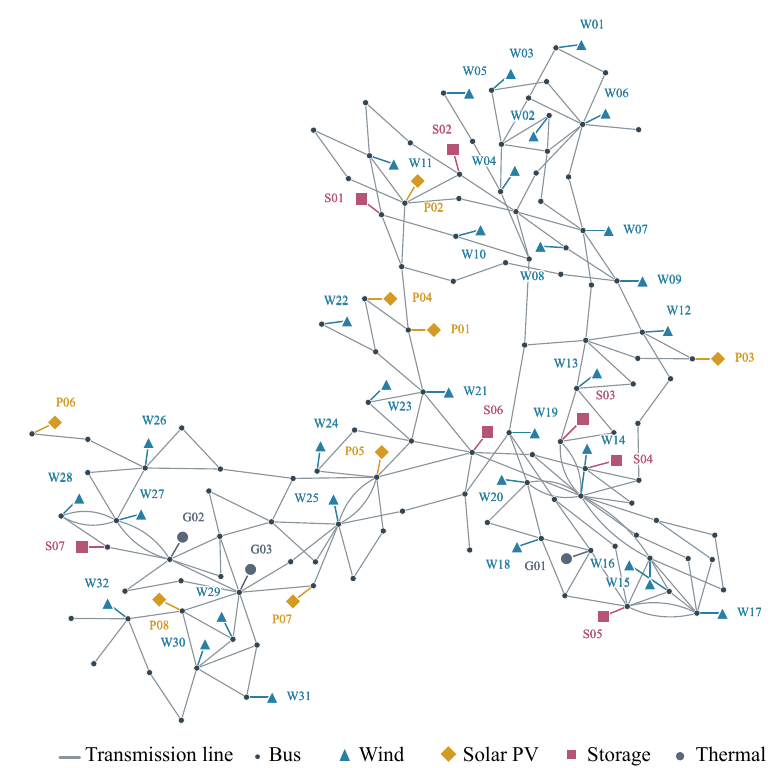}
    \caption{Network structure of the modified IEEE-118 bus system used in this paper. Black dots denote buses, gray lines denote transmission lines, blue triangles denote wind power plants, yellow diamonds denote photovoltaic plants, purple squares denote energy storage systems, and gray disks denote thermal power plants.}
    \label{fig:network}
\end{figure}

\begin{figure}[thbp]
    \centering
    \includegraphics[width=0.9\linewidth]{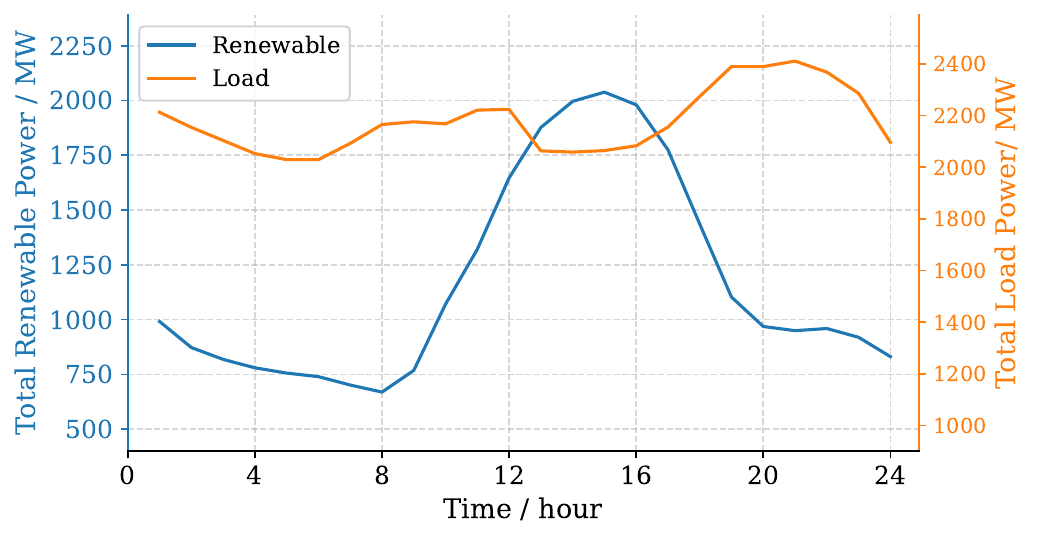}
    \caption{Renewable generation forecast and load demand at all time horizons. The blue line indicates the total renewable generation forecast, while the orange line indicates the total load demand.}
    \label{fig:total-renewable}
\end{figure}

The MILP problems are solved using Gurobi 12.0.
All the codes are run on a computer with an AMD Ryzen 9 9950X CPU and a 256 GB RAM.

\subsection{Model Performance}

The solving process of the proposed method is illustrated in \figref{fig:iteration-process}.
The blue curve represents the minimum gOSCR level across all time horizons, while the orange curve denotes the optimality gap.
The black dashed line indicates the system strength requirement $\gamma_0 = 2.0$.
\jxrevise{At each iteration, the proposed method evaluates the gOSCR levels over all time horizons based on the current scheduling solution.
    If the system strength requirement is violated, Rayleigh Cut constraints are added to remove the corresponding infeasible operating points and progressively enforce the required threshold.
    As shown in \figref{fig:iteration-process}, the MIP gap decreases rapidly to a small value in the early stage.
    The subsequent iterations mainly focus on adding Rayleigh Cuts to eliminate scheduling solutions that violate the system strength constraint, thereby automatically adjusting the scheduling result until all system strength requirements are satisfied.
    The final solution achieves a minimum gOSCR level of $2.0$ with an optimality gap of $0.17\%$.
    The total solving time is 32.96 seconds, involving 39 iterations and 249 added Rayleigh Cut constraints.}

\begin{figure}[thbp]
    \centering
    \includegraphics[width=0.95\linewidth]{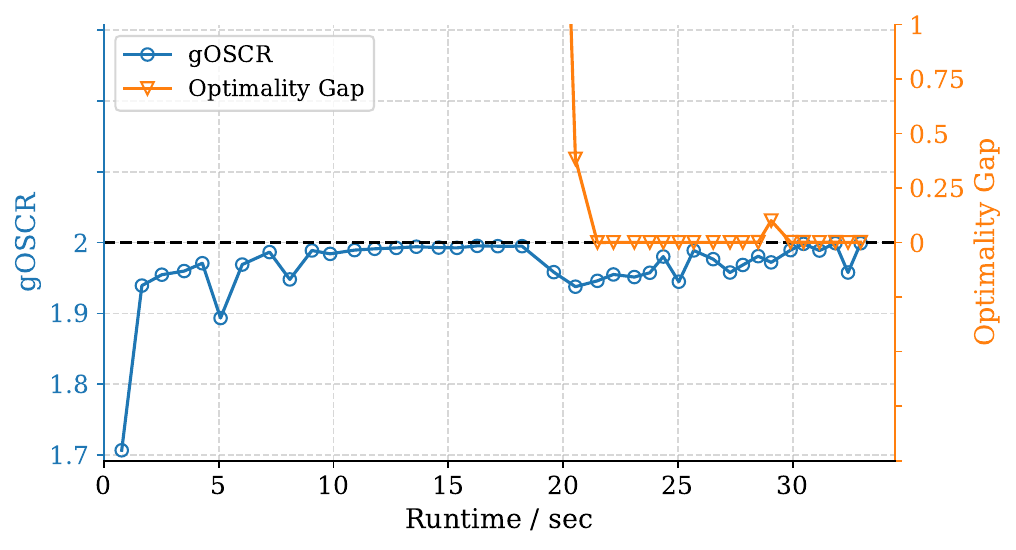}
    \caption{Solving process of the proposed method. The blue line indicates the gOSCR values, while the orange line indicates the optimality gap. The black dashed line indicates the system strength requirement $\gamma_0$.}
    \label{fig:iteration-process}
\end{figure}

\jxrevise{To further validate the reformulated system strength constraint \eqref{eq:final-lmi}, we compare its feasibility with that of the original constraint \eqref{eq:gOSCR-constraint} using 1,000 randomly generated instances with matrix dimensions ranging from 5 to 1,000.
    In \figref{fig:quadrant-consistency}, each blue dot represents one instance.
    The horizontal coordinate is the difference between \(\mathrm{gOSCR}\) and the required threshold \(\gamma_0\), while the vertical coordinate is the minimum eigenvalue of \(B-\gamma_0P\).
    Since the two constraints are theoretically equivalent, all points are expected to lie in the first or third quadrant, indicating identical feasibility outcomes.
    The numerical results confirm this expectation: all 1,000 instances lie in these two quadrants, providing numerical support for the equivalence between \eqref{eq:final-lmi} and \eqref{eq:gOSCR-constraint}.}

\begin{figure}[htbp]
    \centering
    \includegraphics[width=0.8\linewidth]{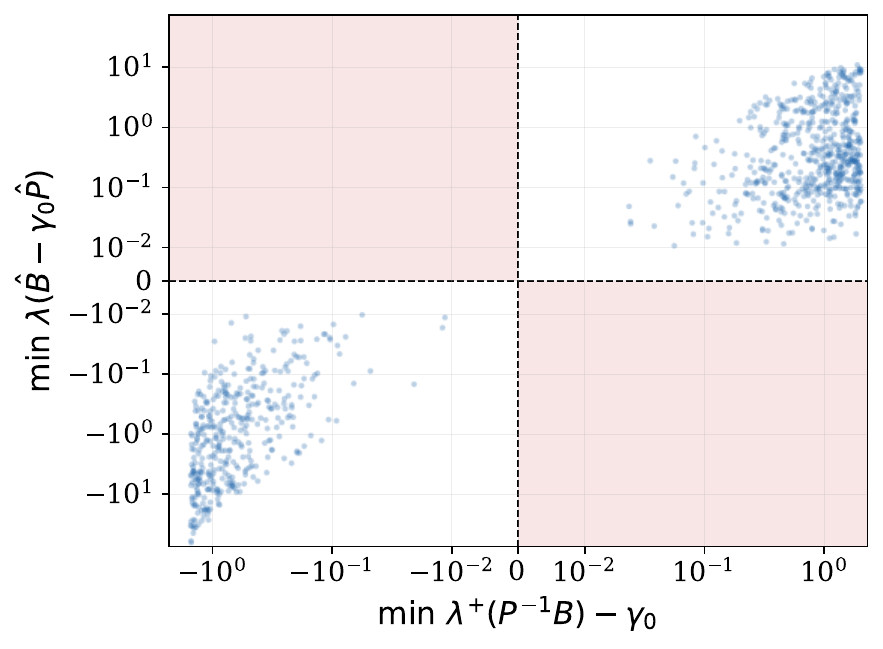}
    \caption{Numerical validation of the equivalence between the original system strength constraint \eqref{eq:gOSCR-constraint} and its reformulation \eqref{eq:final-lmi} using 1,000 randomly generated instances. Each blue dot represents one instance; its horizontal coordinate is the difference between \(\mathrm{gOSCR}\) and the required threshold \(\gamma_0\), and its vertical coordinate is the minimum eigenvalue of \(B-\gamma_0P\).}
    \label{fig:quadrant-consistency}
\end{figure}

\subsection{Interactions Between IBR Operating Behaviors and System Strength Requirement}

\figref{fig:gSCR-results} presents the system strength assessment results across all time horizons obtained by the proposed scheduling model.
The gOSCR levels remain above the required threshold of $2.0$ for all time periods, indicating that the proposed model effectively enforces the system strength requirement during the scheduling stage.
It can also be observed that system strength tends to decrease during periods with higher renewable generation.
This phenomenon can be explained using Theorem~\ref{thm:1}, which shows that the gOSCR level is strongly influenced by the power injections from GFL-mode renewable energy sources and energy storage systems.
\jxrevise{Specifically, the gOSCR level decreases with increasing power injections from these sources, as characterized by
    \begin{equation}
        \label{eq:power-sensitivity}
        \frac{\dif \gamma}{\dif p_i} = - \frac{\gamma \xi_i^2}{\xi^{\T} P \xi} \leq 0,
    \end{equation}
    where $p_i$ is the corresponding power injection and $\xi$ is the right eigenvector associated with $\gamma$. The detailed derivation is provided in Appendix~\ref{app:power-sensitivity}.}
Therefore, under high IBR penetration, system strength may become a critical time-varying bottleneck constraint, especially during periods with high renewable generation.
In \figref{fig:gSCR-results}, the system strength level is relatively low during periods 11-18 and reaches the required threshold in some periods, indicating that the system strength constraint becomes active or nearly active.
During these critical periods, the proposed scheduling model maintains the required system strength by coordinating multiple operational measures, including scheduling part of the renewable energy resources and energy storage systems in GFM mode, increasing the charging power of energy storage systems, and reducing their discharging power, which will be discussed in more detail in the following paragraphs.

\begin{figure}[htbp]
    \centering
    \includegraphics[width=0.87\linewidth]{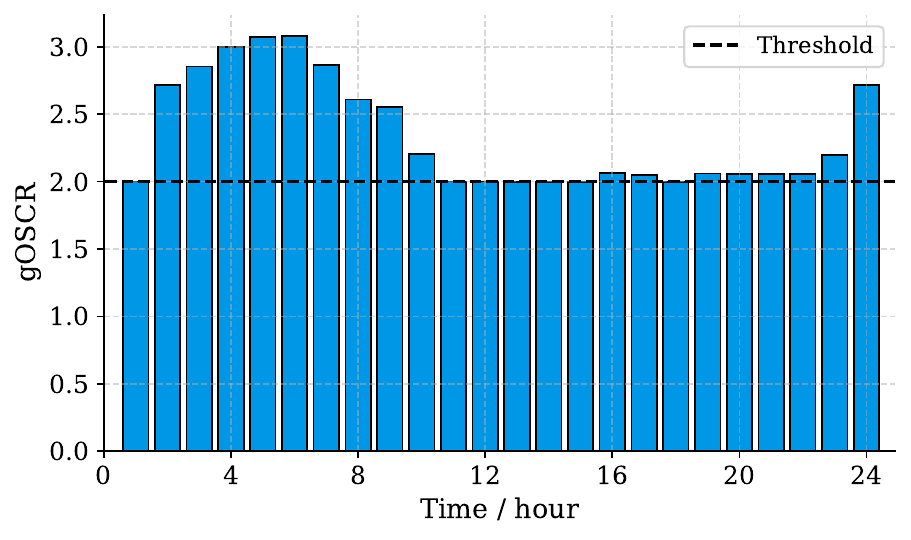}
    \caption{System strength assessment at all time horizons after scheduling using the proposed model. The black dashed line indicates the minimum required gOSCR level.}
    \label{fig:gSCR-results}
\end{figure}

For the energy storage systems, the state-of-charge (SoC) balance constraints require alternating charging and discharging across different periods.
However, the scheduling results indicate a clear tendency toward increased charging (i.e., negative power injection) during periods with higher renewable generation, as illustrated in \figref{fig:total-energy-storage}.
This behavior can be explained from two complementary perspectives.
From the system strength perspective, increased GFL-mode renewable generation tends to reduce the gOSCR level.
In contrast, charging of energy storage systems contributes negatively to power injection and, according to \eqref{eq:power-sensitivity}, increases the minimum eigenvalue of $B - \gamma_0 P$, thereby enhancing the gOSCR level.
Consequently, charging during high-renewable periods helps maintain system strength above the required threshold.
From the power balance perspective, energy storage systems absorb surplus renewable generation during high-output periods and discharge during low-renewable periods.
This operational pattern smooths the net load profile and facilitates system-wide power balance.
Therefore, it can be concluded that the power balance constraint and the system strength constraint have a synergistic effect on the scheduling of energy storage systems, resulting in increased charging during periods of high renewable generation and increased discharging during periods of low renewable generation.

\begin{figure}[htbp]
    \centering
    \includegraphics[width=0.87\linewidth]{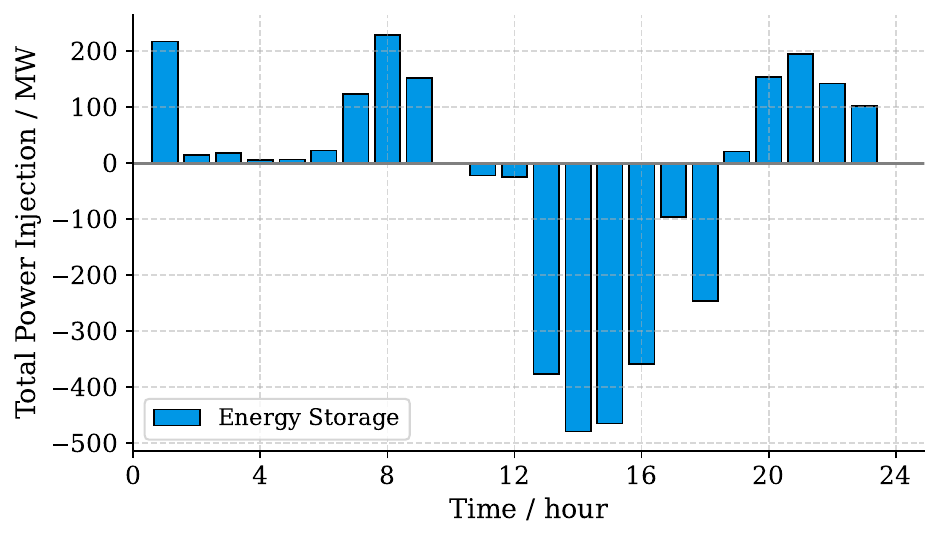}
    \caption{Total scheduled output of energy storage systems at all time horizons. Positive values indicate discharging, while negative values indicate charging.}
    \label{fig:total-energy-storage}
\end{figure}

In addition, switching to GFM mode significantly enhances system strength.
\figref{fig:one-energy-storage} presents the scheduling results of a representative energy storage system, where the green shaded regions denote the periods during which the unit operates in GFM mode.
During these intervals, the energy storage system must reserve a portion of its power and energy capacity to provide GFM services.
Although this reservation incurs an opportunity cost, it contributes to strengthening the system.
This trade-off is clearly reflected in the scheduling outcomes.
The energy storage system is more likely to operate in GFM mode during periods characterized by higher renewable generation and lower gOSCR levels, where additional system strength support is most needed.
This behavior further demonstrates that the proposed scheduling model effectively balances economic efficiency and system strength requirements.
To provide a quantitative illustration of the trade-off between system strength requirements and operational economy, we further conduct a sensitivity study by increasing the required system strength threshold by one unit while keeping all other settings unchanged.
The total operating cost increases by 207,541 CNY, approximately 30,000 USD, after resolving the scheduling problem.
As analyzed above, this cost increase is mainly caused by the need to schedule more IBRs in GFM mode to provide additional system strength support, which leads to higher opportunity costs.

\begin{figure}[htbp]
    \centering
    \includegraphics[width=0.95\linewidth]{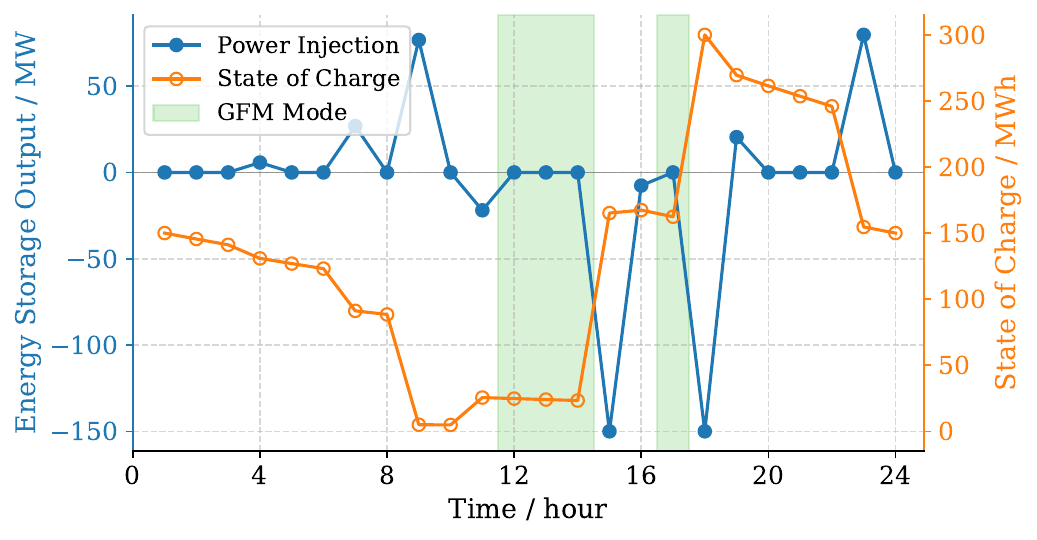}
    \caption{Scheduled output of one representative energy storage system. The blue line indicates the scheduled output, where positive values indicate discharging and negative values indicate charging. The orange line indicates the SoC of the energy storage system. The green area indicates the periods when the energy storage system operates in GFM mode.}
    \label{fig:one-energy-storage}
\end{figure}

For renewable resources such as wind farms and PV plants, a similar pattern can be observed: they are more likely to operate in GFM mode during periods characterized by higher renewable generation and lower gOSCR levels.
Seven renewable generators are scheduled to operate in GFM mode in our case studies, as illustrated in \figref{fig:all-renewables}, which shows the scheduled renewable generation and the reserved power for GFM operation at all time horizons.
This scheduling pattern yields two primary benefits.
First, operating in GFM mode during high-renewable periods enhances system strength and helps maintain the gOSCR level above the required threshold, as discussed previously.
Second, sufficient renewable generation provides the necessary headroom for GFM operation.
Moreover, the reserved power maintained for GFM services reduces the peak renewable output to some extent, which contributes to facilitating system-wide power balance.

\begin{figure}[htbp]
    \centering
    \includegraphics[width=0.99\linewidth]{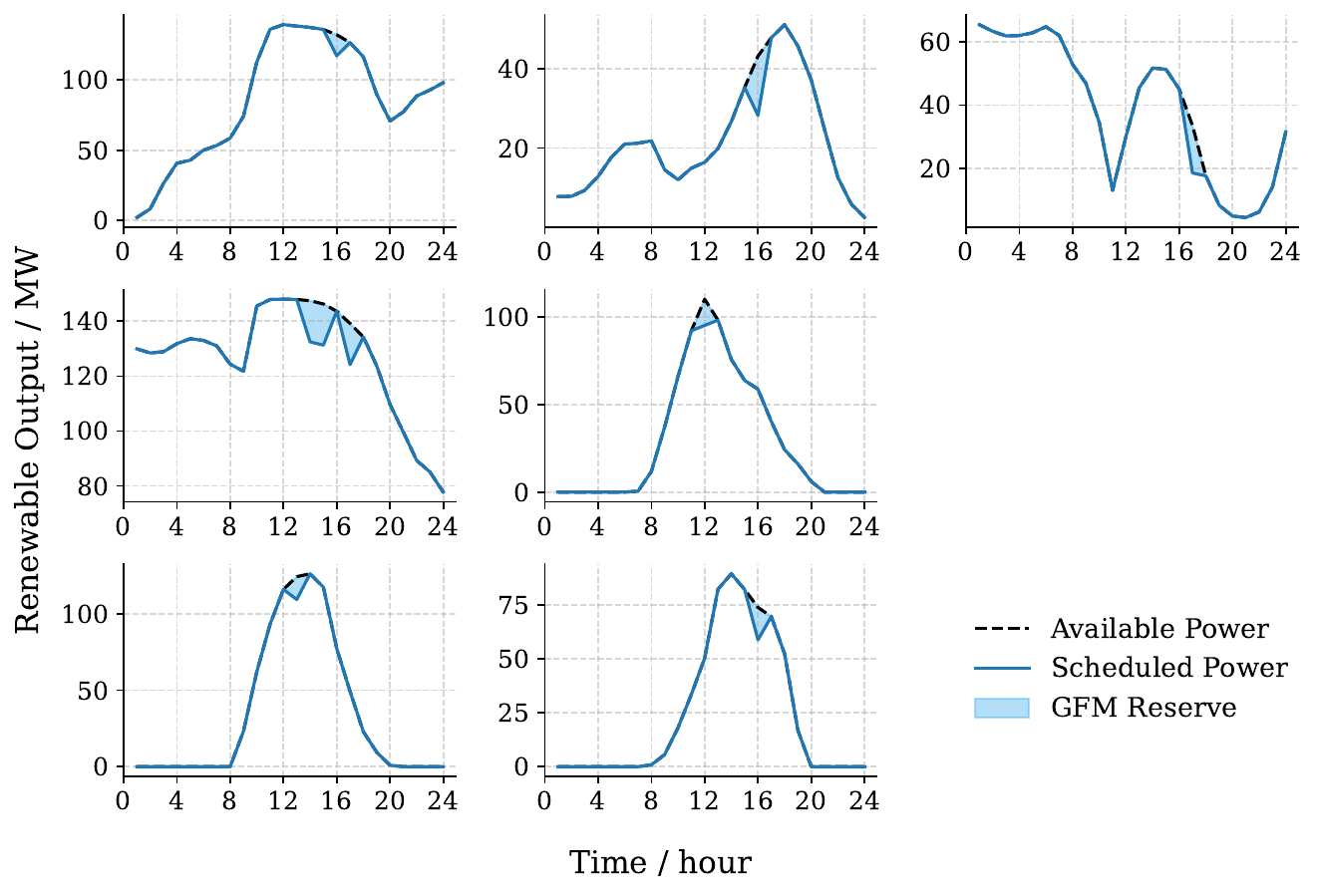}
    \caption{Scheduled renewable generations with GFM reserves at all time horizons. The black dashed line indicates the renewable generation forecast, the blue line indicates the scheduled renewable generation, while the light blue area indicates the power reserved for GFM operation.}
    \label{fig:all-renewables}
\end{figure}

We also conducted sensitivity studies on the power headroom parameter and the SOC reserve parameter.
First, while keeping the SOC reserve fixed at 8\%, we increased the required power headroom from 10\% to 15\%.
In this case, due to the higher headroom requirement, the total number of time periods in which renewable energy resources operate in GFM mode decreases from 9 to 3.
Meanwhile, to satisfy the system strength requirement, the total number of time periods in which energy storage systems operate in GFM mode increases from 12 to 24.
We then kept the power headroom requirement fixed at 10\% and increased the SOC reserve requirement from 8\% to 13\%.
In this case, the total number of GFM operating periods for renewable energy resources increases from 9 to 10, while that for energy storage systems decreases from 12 to 7.
These trends are consistent with the general qualitative understanding of the scheduling behavior.
In practical applications, the headroom parameter is reported by device manufacturers to the system operator and serves as one of the boundary conditions of the proposed scheduling model, similar to the data from conventional thermal units.
Its specific impact on the optimization results is difficult to predict precisely in advance.
However, once such data are available, the proposed scheduling model can be solved accordingly.

\subsection{Effects of Unit Commitment on IBR Mode Switching}

Under the baseline case-study setting discussed above, all thermal generators remain online throughout the 24-hour scheduling horizon, with no unit commitment transitions.
This behavior is mainly driven by the need to maintain sufficient thermal generation online during periods of low renewable generation to ensure power balance.
Although some thermal generators could be shut down during periods of high renewable generation, thermal units typically incur substantial startup costs relative to the operating costs of renewable generation.
As a result, shutting them down and recommitting them later is often not economically attractive, making continuous operation the more economical choice over the scheduling horizon.
To further investigate the impact of unit commitment on the GFM/GFL mode switching of IBRs, we reduce the startup costs and re-solve the system strength-constrained scheduling problem.
The resulting commitment status is shown in \figref{fig:uc-low}, where solid circles denote the on status, while hollow circles denote the off status.
It can be observed that the two 300-MW thermal generators are shut down during periods of high renewable generation, allowing the system to accommodate more low-cost renewable generation for power balancing.

\begin{figure}[thbp]
    \centering
    \includegraphics[width=0.85\linewidth]{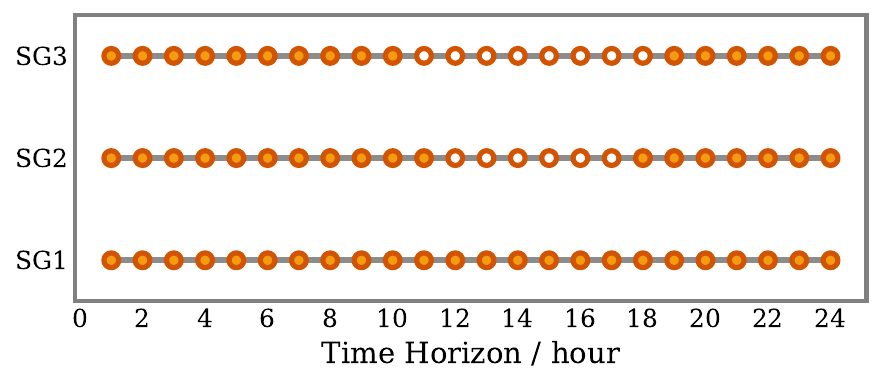}
    \caption{Thermal generator commitment status with reduced startup costs. Solid circles denote the on status, while hollow circles denote the off status. The horizontal axis represents time periods, and the vertical axis represents different synchronous generators.}
    \label{fig:uc-low}
\end{figure}

The shutdown of these thermal generators reduces the system strength support provided by synchronous generation, which is compensated by increased GFM operation of IBRs to maintain the gOSCR above the prescribed threshold.
In the baseline case, the numbers of renewable energy sources operating in GFM mode during hours 12-17 are only 1, 1, 1, 1, 3, and 2, respectively, as shown in \figref{fig:all-renewables}.
Under the modified case-study setting, five renewable energy sources are scheduled in GFM mode at each hour during hours 12-17, with the specific units selected for GFM operation varying over time among all renewable energy sources.
This result clearly illustrates the coordination between thermal unit commitment and GFM/GFL mode switching in maintaining adequate system strength.
This also indicates that satisfying the system strength requirement does not correspond to a unique operating pattern; multiple combinations of unit commitment and GFM/GFL mode switching can provide sufficient system strength, but with different operating costs.
The proposed scheduling framework identifies the economically optimal solution among these feasible alternatives under the given cost parameters, load forecasts, and other model inputs.

\subsection{Comparison with Direct MISDP Solution}

To further demonstrate the advantages of the proposed Rayleigh-cut solution method, we also solve the above case using a standard direct MISDP solution approach.
Specifically, the MISDP is handled by the branch-and-bound framework in YALMIP, with MOSEK employed to solve the SDP relaxation at each node.

The results demonstrate the significant computational advantage of the proposed Rayleigh-cut solution method.
For the direct MISDP solution, the first feasible solution is found after 45.7 minutes, and the MIP gap remains as high as $85.78\%$ after 2 hours, indicating that the solution is still far from optimal.
In contrast, the proposed method reaches a MIP gap of only $0.17\%$ within 32.96 seconds.
This substantial improvement is achieved because each Rayleigh cut removes a nonempty infeasible region from the search space, thereby progressively tightening the feasible domain and accelerating convergence toward the optimal solution.
\figref{fig:misdp-gOSCR} presents the system strength assessment results across all time horizons for the scheduling solution obtained by the direct MISDP method after 2 hours.
Compared with the results obtained by the proposed method in \figref{fig:gSCR-results}, the gOSCR levels from the direct MISDP solution are noticeably over-conservative, with the maximum gOSCR reaching 8.2, which is significantly higher than the required threshold of 2.0.
Such over-conservativeness leads to unnecessary economic costs, as the system is operated with excessive system strength support.
Specifically, more than 27 renewable energy sources are scheduled in GFM mode, and four unnecessary thermal unit commitment transitions are observed over the scheduling horizon, leading to additional GFM opportunity costs and thermal startup costs, respectively.

\begin{figure}[htbp]
    \centering
    \includegraphics[width=0.87\linewidth]{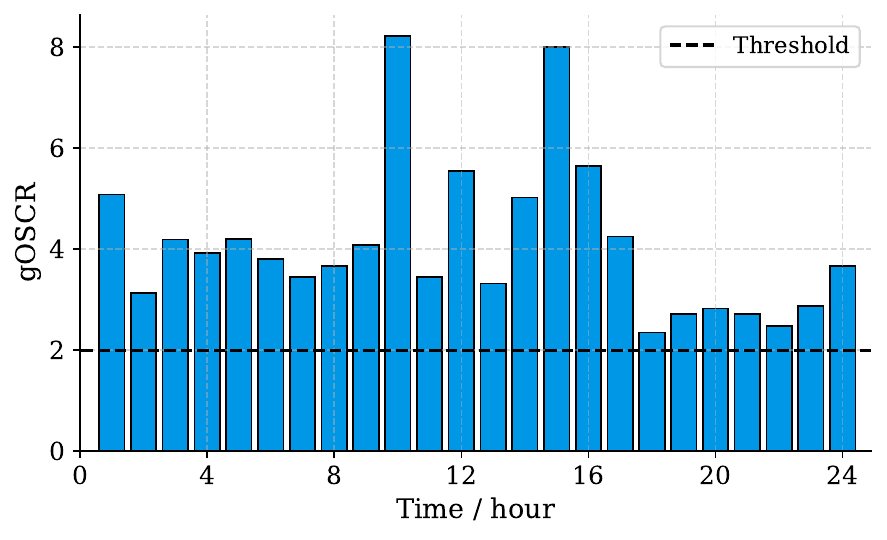}
    \caption{System strength assessment at all time horizons for the scheduling result obtained by the direct MISDP solution after 2 hours. The black dashed line indicates the minimum required gOSCR level.}
    \label{fig:misdp-gOSCR}
\end{figure}

\subsection{Application to a Real-World System}

To further validate the proposed scheduling model, we apply it to a transmission-level power system equivalently derived from a practical Jiangsu provincial power grid in China.
Geographically sensitive information has been removed from the system data.
The resulting test system consists of 194 buses and 238 transmission lines, and the data are obtained from a collaborative project with Jiangsu Power Grid.
Since this paper focuses on IBR-dominated power systems, the generation mix is further adjusted by increasing the penetration level of renewable energy resources.
The network topology is shown in \figref{fig:jiangsu-network}, where black dots denote buses, gray lines denote transmission lines, blue triangles denote wind power plants, yellow diamonds denote photovoltaic plants, purple squares denote energy storage systems, and gray disks denote thermal power plants.

\begin{figure}[htbp]
    \centering
    \includegraphics[width=0.99\linewidth]{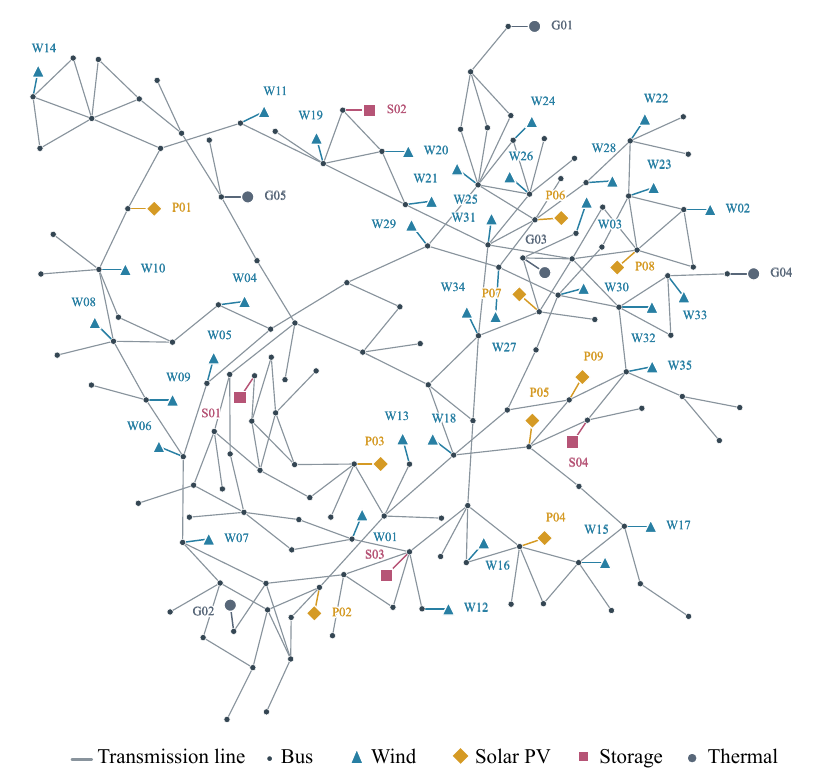}
    \caption{Network structure of the Jiangsu power system used in this paper. Black dots denote buses, gray lines denote transmission lines, blue triangles denote wind power plants, yellow diamonds denote photovoltaic plants, purple squares denote energy storage systems, and gray disks denote thermal power plants.}
    \label{fig:jiangsu-network}
\end{figure}

The proposed method solves the 24-hour system-strength-constrained scheduling problem for the Jiangsu system in 207.83 seconds.
\figref{fig:jiangsu-gOSCR} presents the system strength assessment results at each time horizon after scheduling.
It can be observed that the system strength levels remain above the required threshold for all time horizons.
This further verifies the accuracy of the proposed system strength constraints, which is theoretically guaranteed by the rigorous reformulation developed in this paper.
In particular, the proposed reformulation does not introduce additional approximation errors.
Similar to the modified IEEE 118-bus system, we also conduct a sensitivity study on the Jiangsu system by increasing the required system strength threshold by one unit while keeping all other settings unchanged.
The total operating cost increases by 1,041,309 CNY, approximately 150,000 USD.
This cost increase reflects the additional opportunity costs incurred by scheduling more IBRs in GFM mode to provide system strength support, which is consistent with the observations from the modified IEEE 118-bus system.

\begin{figure}[htbp]
    \centering
    \includegraphics[width=0.87\linewidth]{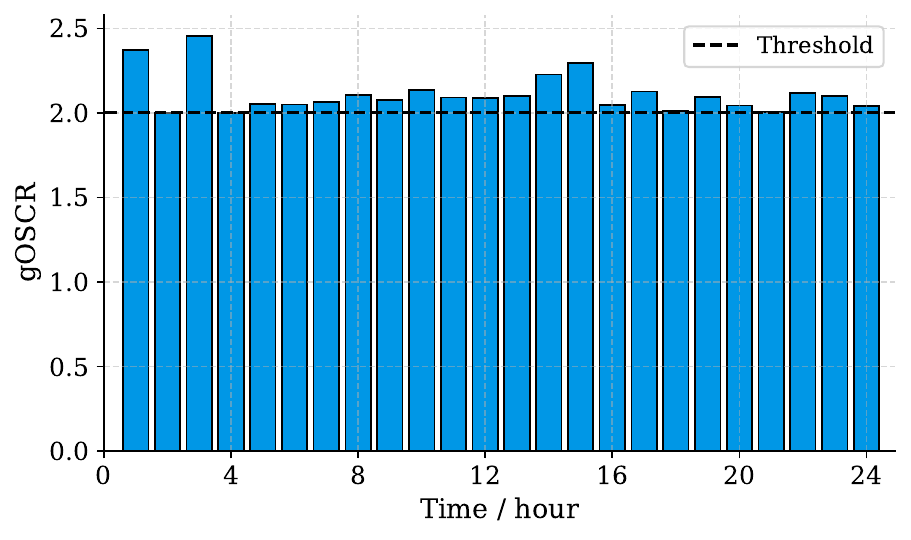}
    \caption{System strength assessment for the Jiangsu power system. The black dashed line indicates the minimum required gOSCR level.}
    \label{fig:jiangsu-gOSCR}
\end{figure}

\section{Conclusions\label{sec:conclusions}}

We have derived an equivalent LMI reformulation of the system strength constraint through a series of mathematical transformations, effectively addressing the non-convexity and dimension variation issues introduced by GFM/GFL mode switching of IBRs.
By integrating the LMI constraint with operational constraints, we establish a comprehensive system strength-constrained scheduling model, formulated as a MISDP problem, for IBR-dominated power systems with GFL/GFM mode switching considered.
To solve this problem, a Rayleigh Cut-based solution method is developed.
The proposed approach iteratively introduces cutting planes within the branch-and-bound framework of MILP problems, which is thus well compatible with commercial solvers, easily implementable, and optimality-guaranteed (within the MIP gap).
Case studies conducted on a modified IEEE 118-bus system demonstrate the effectiveness of the proposed method in maintaining system strength while optimizing scheduling decisions in IBR-dominated power systems.
The results reveal a clear synergistic interaction between power balance constraints and system strength constraints, offering valuable insights into the trade-offs between steady-state operational efficiency and dynamic stability requirements in power system scheduling.
Furthermore, the proposed framework provides a systematic tool for investigating operational patterns in IBR-dominated systems.
It captures how system strength constraints influence scheduling decisions and shape the operational behavior of IBRs, including their power outputs and GFL/GFM mode selections.

\appendix[]

\subsection{\texorpdfstring{Parameter Setting for the Big $M$ in \eqref{eq:x-A}}{}\label{app:big-M}}

Since \(M\) must be larger than the maximum of \(\abs{A_{ij,t}}\) for all \(i,j\in\mathcal{J}\) and \(t\in\mathcal{T}\), we derive a computable upper bound on $\abs{A_{ij,t}}$, which can be used to set the value of \(M\) in practice.
For notational simplicity, we omit the time index \(t\) in the following derivation.

Let $A=(B_{\mathcal{J}\mathcal{J}}^{\mathrm{sys}})^{-1}$ and denote the maximum and minimum singular values of $B_{\mathcal{J}\mathcal{J}}^{\mathrm{sys}}$ by $\sigma_{\max}$ and $\sigma_{\min}$, respectively.
It follows that
\begin{equation}
    \norm{A}_2=\norm{(B^{\mathrm{sys}}_{\mathcal{J}\mathcal{J}})^{-1}}_2=\frac{\sigma_{\max}}{\sigma_{\min}}\norm{B^{\mathrm{sys}}_{\mathcal{J}\mathcal{J}}}_2^{-1}.
\end{equation}
According to \eqref{eq:Bsys-x}, $B^{\mathrm{sys}}_{\mathcal{J}\mathcal{J}}=B^{\mathrm{pf}}_{\mathcal{J}\mathcal{J}}+D$, where $D$ is a diagonal matrix with non-negative entries.
Therefore,
\begin{equation}
    \norm{B^{\mathrm{sys}}_{\mathcal{J}\mathcal{J}}}_2\geq\frac{1}{\sqrt{n}}\norm{B^{\mathrm{pf}}_{\mathcal{J}\mathcal{J}}+D}_{\infty}\geq\frac{1}{\sqrt{n}} \norm{B^{\mathrm{pf}}_{\mathcal{J}\mathcal{J}}}_{\infty}.
\end{equation}
Combining these results gives
\begin{equation}
    \label{eq:app-norm}
    \norm{A}_{\infty}\leq \sqrt{n}\norm{A}_2\leq n\frac{\sigma_{\max}}{\sigma_{\min}}\norm{B^{\mathrm{pf}}_{\mathcal{J}\mathcal{J}}}_{\infty}^{-1}.
\end{equation}

Let \(\tau_{\max}\) and \(\tau_{\min}\) denote the maximum and minimum singular values of \(B_{\mathcal{J}\mathcal{J}}^{\mathrm{pf}}\), respectively, and let \(d_{\max}\) and \(d_{\min}\) denote the maximum and minimum diagonal entries of \(D\).
We further have
\begin{align}
    \label{eq:app-maxmin}
    \sigma_{\max} & =\max_{\norm{q}_2=1}(q^{\T}B^{\mathrm{pf}}_{\mathcal{J}\mathcal{J}}q+q^{\T}Dq)\leq \tau_{\max}+d_{\max},  \\
    \sigma_{\min} & =\min_{\norm{q}_2=1}(q^{\T}B^{\mathrm{pf}}_{\mathcal{J}\mathcal{J}}+q^{\T}Dq)\geq \tau_{\min} + d_{\min}.
\end{align}
Substituting \eqref{eq:app-maxmin} into \eqref{eq:app-norm} yields
\begin{equation}
    \norm{A}_{\infty}\leq n\frac{\tau_{\max}+d_{\max}}{\tau_{\min}+d_{\min}}\norm{B^{\mathrm{pf}}_{\mathcal{J}\mathcal{J}}}_{\infty}^{-1}=\kappa\norm{B^{\mathrm{pf}}_{\mathcal{J}\mathcal{J}}}_{\infty}^{-1},
\end{equation}
where $\kappa$ and $B^{\mathrm{pf}}_{\mathcal{JJ}}$ are independent of the decision variables.

Since
\(\abs{A_{ij}}\leq\norm{A}_{\infty}\), a valid choice is therefore
\begin{equation}
    M=
    \kappa
    \norm{B_{\mathcal{J}\mathcal{J}}^{\mathrm{pf}}}_{\infty}^{-1},
\end{equation}
which guarantees the validity of the big-M formulation.

\subsection{Illustrative Example}

To provide a concrete illustration of the reformulation developed in this paper, we consider the three-bus system shown in \figref{fig:example}.
The system comprises one synchronous generator connected to Bus 3 and two IBRs connected to Buses 1 and 2, respectively.
All line reactances are set to 0.05 p.u.
Using this simple system, we demonstrate how the original system strength constraint \eqref{eq:gOSCR-constraint}, which contains non-explicit expressions and decision variables of varying dimensions, can be equivalently transformed into an LMI constraint \eqref{eq:final-lmi} that is more tractable for optimization.

\begin{figure}[thbp]
    \centering
    \includegraphics[width=0.55\linewidth]{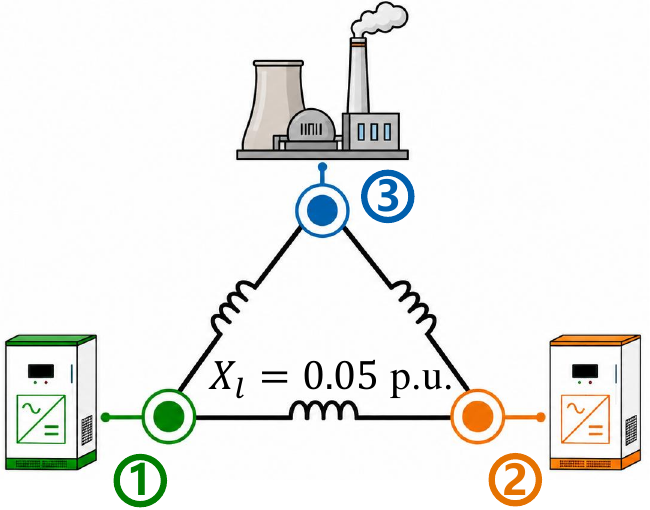}
    \caption{A three-bus system comprising one synchronous generator and two IBRs is considered for the illustrative example. The generator is connected to Bus 3, while the two IBRs are connected to Buses 1 and 2, respectively. The line reactances are given in per-unit, with all values set to 0.05 p.u.}
    \label{fig:example}
\end{figure}

The network admittance matrix of this system is given by
\begin{equation}
    B^{\mathrm{pf}} =
    \begin{bmatrix}
        40  & -20 & -20 \\
        -20 & 40  & -20 \\
        -20 & -20 & 40
    \end{bmatrix}.
\end{equation}
Using \eqref{eq:Bsys-x}, the corresponding system admittance matrix \(B^{\mathrm{sys}}\) can be written as
\begin{equation}
    B^{\mathrm{sys}} =
    \begin{bmatrix}
        40 - b_1 x_{1,t} & -20              & -20              \\
        -20              & 40 - b_2 x_{2,t} & -20              \\
        -20              & -20              & 40 - b_3 x_{3,t}
    \end{bmatrix}.
\end{equation}
Based on Corollary~\ref{cor:1}, the reduced matrix \( \hat{B} \) is obtained as
\begin{equation}
    \hat{B}\! =\!
    \begin{bmatrix}
        30\!-\!b_1x_{1,t}\!-\!\dfrac{10b_3x_{3,t}}{40-b_3}
         &
        -30-\dfrac{10b_3x_{3,t}}{40-b_3}
        \\[7pt]
        -30-\dfrac{10b_3x_{3,t}}{40-b_3}
         &
        30\!-\!b_2x_{2,t}\!-\!\dfrac{10b_3x_{3,t}}{40-b_3}
    \end{bmatrix}.
\end{equation}
After introducing the auxiliary variables \(\rho_{i,t}\) as defined in \eqref{eq:linaerize-P}, the reduced matrix \(\hat{P}\) in \eqref{eq:P-x} can be expressed as
\begin{equation}
    \hat{P} =\begin{bmatrix}
        p_{1,t}- \rho_{1,t} & 0                   \\
        0                   & p_{2,t}- \rho_{2,t}
    \end{bmatrix}.
\end{equation}

Accordingly, the original system strength constraint can be reformulated as
\(
\hat{B}-\gamma_0\hat{P}\succeq 0,
\)
which is an LMI constraint with a fixed dimension.
Notably, the effects of both the GFM/GFL mode switching of IBRs and the commitment status of synchronous generators are retained exactly in this reformulation, without introducing any modeling approximation.

\subsection{Derivative of the gOSCR to GFL Power Injections\label{app:power-sensitivity}}

Suppose \(\gamma\) is the gOSCR and \(\xi\) is the corresponding eigenvector, then we have
\begin{equation}
    P^{-1}B\xi = \gamma \xi \implies B\xi = \gamma P\xi.
\end{equation}
Differentiating both sides with respect to \(p_i\), we get
\begin{equation}
    B\frac{\dif \xi}{\dif p_i} = \frac{\dif \gamma}{\dif p_i} P \xi + \gamma \frac{\dif P}{\dif p_i}\xi + \gamma P \frac{\dif \xi}{\dif p_i}.
\end{equation}
Since \(\xi^{\T}B=\gamma\xi^{\T}P\), we multiply both sides by \(\xi^{\T}\) to obtain
\begin{equation}
    \frac{\dif\gamma}{\dif p_i} = -\gamma\frac{\xi_i^2}{\xi^{\T}P\xi}.
\end{equation}
We notice that
\begin{equation}
    \gamma\xi^{\T}P\xi = \xi^{\T}B\xi \geq 0 \implies \frac{\dif\gamma}{\dif p_i} = -\gamma\frac{\xi_i^2}{\xi^{\T}P\xi}\leq 0.
\end{equation}

\bibliographystyle{IEEEtran}
\bibliography{src/refs.bib}

@article{jiaxin2025synchronous,
  author   = {Wang, Jiaxin and Zhang, Jiawei and Hou, Qingchun and Zhang, Ning},
  journal  = {IEEE Transactions on Power Systems},
  title    = {Synchronous Condenser Placement for Multiple HVDC Power Systems Considering Short-Circuit Ratio Requirements},
  year     = {2025},
  volume   = {40},
  number   = {1},
  pages    = {765-779},
  doi      = {10.1109/TPWRS.2024.3404116}
}

@article{chenxi2024generalized,
  author   = {Liu, Chenxi and Xin, Huanhai and Wu, Di and Gao, Huisheng and Yuan, Hui and Zhou, Yuhan},
  journal  = {IEEE Transactions on Power Systems},
  title    = {Generalized Operational Short-Circuit Ratio for Grid Strength Assessment in Power Systems With High Renewable Penetration},
  year     = {2024},
  volume   = {39},
  number   = {4},
  pages    = {5479-5494},
  doi      = {10.1109/TPWRS.2023.3340158}
}

@article{nikos2021definition,
  author   = {Hatziargyriou, Nikos and Milanovic, Jovica and Rahmann, Claudia and Ajjarapu, Venkataramana and Canizares, Claudio and Erlich, Istvan and Hill, David and Hiskens, Ian and Kamwa, Innocent and Pal, Bikash and Pourbeik, Pouyan and Sanchez-Gasca, Juan and Stankovic, Aleksandar and Van Cutsem, Thierry and Vittal, Vijay and Vournas, Costas},
  journal  = {IEEE Transactions on Power Systems},
  title    = {Definition and Classification of Power System Stability – Revisited \& Extended},
  year     = {2021},
  volume   = {36},
  number   = {4},
  pages    = {3271-3281},
  doi      = {10.1109/TPWRS.2020.3041774}
}

@article{ning2023data,
  author   = {Zhang, Ning and Jia, Hongyang and Hou, Qingchun and Zhang, Ziyang and Xia, Tian and Cai, Xiao and Wang, Jiaxin},
  journal  = {Proceedings of the IEEE},
  title    = {Data-Driven Security and Stability Rule in High Renewable Penetrated Power System Operation},
  year     = {2023},
  volume   = {111},
  number   = {7},
  pages    = {788-805},
  doi      = {10.1109/JPROC.2022.3192719}
}

@article{kehao2026quantitative,
  author   = {Zhuang, Kehao and Xin, Huanhai and Chen, Hangyu and Huang, Linbin},
  journal  = {IEEE Transactions on Sustainable Energy},
  title    = {Quantitative Parameter Conditions for Stability and Coupling in GFM–GFL Converter Hybrid Systems From a Small-Signal Synchronous Perspective},
  year     = {2026},
  volume   = {},
  number   = {},
  pages    = {1-13},
  doi      = {10.1109/TSTE.2026.3662898}
}

@article{yunjie2023power,
  author   = {Gu, Yunjie and Green, Timothy C.},
  journal  = {Proceedings of the IEEE},
  title    = {Power System Stability With a High Penetration of Inverter-Based Resources},
  year     = {2023},
  volume   = {111},
  number   = {7},
  pages    = {832-853},
  doi      = {10.1109/JPROC.2022.3179826}
}

@article{qingchun2020impact,
  author   = {Hou, Qingchun and Du, Ershun and Zhang, Ning and Kang, Chongqing},
  journal  = {IEEE Transactions on Power Systems},
  title    = {Impact of High Renewable Penetration on the Power System Operation Mode: A Data-Driven Approach},
  year     = {2020},
  volume   = {35},
  number   = {1},
  pages    = {731-741},
  doi      = {10.1109/TPWRS.2019.2929276}
}

@article{yitong2022revisiting,
  author   = {Li, Yitong and Gu, Yunjie and Green, Timothy C.},
  journal  = {IEEE Transactions on Power Systems},
  title    = {Revisiting Grid-Forming and Grid-Following Inverters: A Duality Theory},
  year     = {2022},
  volume   = {37},
  number   = {6},
  pages    = {4541-4554},
  doi      = {10.1109/TPWRS.2022.3151851}
}

@article{huanhai2025many,
  author   = {Xin, Huanhai and Liu, Chenxi and Chen, Xia and Wang, Yuxuan and Prieto-Araujo, Eduardo and Huang, Linbin},
  journal  = {IEEE Transactions on Power Systems},
  title    = {How Many Grid-Forming Converters Do We Need? A Perspective From Small Signal Stability and Power Grid Strength},
  year     = {2025},
  volume   = {40},
  number   = {1},
  pages    = {623-635},
  doi      = {10.1109/TPWRS.2024.3393877}
}

@article{guoxuan2025many,
  author   = {Cui, Guoxuan and Jia, Hongyang and Zhang, Ning and Teng, Fei},
  journal  = {iEnergy},
  title    = {How many grid-forming converters are needed? — A techno-economic perspective},
  year     = {2025},
  volume   = {4},
  number   = {2},
  pages    = {79-85},
  doi      = {10.23919/IEN.2025.0012}
}

@article{krishayya1997ieee,
  title   = {{IEEE} guide for planning {DC} links terminating at {AC} locations having low short-circuit capacities, part {I}: {AC}/{DC} system interaction phenomena},
  author  = {Krishayya, PCS and others},
  journal = {CIGRE, France},
  pages   = {1-216},
  year    = {1997}
}

@techreport{cigre2008b4,
  title       = {System with Multiple {DC} Infeed},
  author      = {Davies, B. and others},
  year        = {2008},
  institution = {Cigr{\'e}, WG, B4.41},
  number      = {ELT\_241\_6},
  url         = {https://www.e-cigre.org/publications/detail/364-systems-with-multiple-dc-infeed.html}
}

@article{fuyilong2024assessing,
  author   = {Ma, Fuyilong and Xin, Huanhai and Wu, Di and Liu, Yun and Chen, Xia},
  journal  = {IEEE Transactions on Power Delivery},
  title    = {Assessing Small-Signal Grid Strength of 100\% Inverter-Based Power Systems},
  year     = {2024},
  volume   = {39},
  number   = {5},
  pages    = {2784-2796},
  doi      = {10.1109/TPWRD.2024.3432582}
}

@article{yue2024impedance,
  author   = {Zhu, Yue and Green, Timothy C. and Zhou, Xiaoyao and Li, Yitong and Kong, Dechao and Gu, Yunjie},
  journal  = {IEEE Transactions on Power Systems},
  title    = {Impedance Margin Ratio: A New Metric for Small-Signal System Strength},
  year     = {2024},
  volume   = {39},
  number   = {6},
  pages    = {7291-7303},
  doi      = {10.1109/TPWRS.2024.3371231}
}

@article{callum2024grid,
  author   = {Henderson, Callum and Egea-Alvarez, Agusti and Kneuppel, Thyge and Yang, Guangya and Xu, Lie},
  journal  = {IEEE Transactions on Power Delivery},
  title    = {Grid Strength Impedance Metric: An Alternative to SCR for Evaluating System Strength in Converter Dominated Systems},
  year     = {2024},
  volume   = {39},
  number   = {1},
  pages    = {386-396},
  doi      = {10.1109/TPWRD.2022.3233455}
}

@article{yunjie2021impedance,
  author   = {Gu, Yunjie and Li, Yitong and Zhu, Yue and Green, Timothy C.},
  journal  = {IEEE Transactions on Power Systems},
  title    = {Impedance-Based Whole-System Modeling for a Composite Grid via Embedding of Frame Dynamics},
  year     = {2021},
  volume   = {36},
  number   = {1},
  pages    = {336-345},
  doi      = {10.1109/TPWRS.2020.3004377}
}

@article{zhongda2023voltage,
  author   = {Chu, Zhongda and Teng, Fei},
  journal  = {IEEE Transactions on Power Systems},
  title    = {Voltage Stability Constrained Unit Commitment in Power Systems With High Penetration of Inverter-Based Generators},
  year     = {2023},
  volume   = {38},
  number   = {2},
  pages    = {1572-1582},
  doi      = {10.1109/TPWRS.2022.3179563}
}

@article{guoxuan2025control,
  author   = {Cui, Guoxuan and Chu, Zhongda and Teng, Fei},
  journal  = {IEEE Transactions on Power Systems},
  title    = {Control-Mode as a Grid Service in Software-Defined Power Grids: GFL vs GFM},
  year     = {2025},
  volume   = {40},
  number   = {1},
  pages    = {314-326},
  doi      = {10.1109/TPWRS.2024.3404339}
}

@article{yongkyu2025strength,
  author   = {Kim, Yong-Kyu and Lee, Sang-Ho and Lee, Gyu-Sub},
  journal  = {iEnergy},
  title    = {Strength-constrained unit commitment in IBR dominant power systems},
  year     = {2025},
  volume   = {4},
  number   = {2},
  pages    = {121-131},
  doi      = {10.23919/IEN.2025.0011}
}

@article{yongkyu2022evaluation,
  author   = {Kim, Yong-Kyu and Lee, Gyu-Sub and Yoon, Jong-Su and Moon, Seung-Il},
  journal  = {IEEE Transactions on Sustainable Energy},
  title    = {Evaluation for Maximum Allowable Capacity of Renewable Energy Source Considering AC System Strength Measures},
  year     = {2022},
  volume   = {13},
  number   = {2},
  pages    = {1123-1134},
  doi      = {10.1109/TSTE.2022.3152349}
}

@article{yuanhui2025placing,
  author   = {Yuan, Hui and Hao, Yi and Xin, Huanhai and Huang, Linbin and Zhou, Yuhan and Wang, Xiaofei and Chen, Chunmeng and Lu, Guoqiang and Qu, Linan and Wu, Di},
  journal  = {IEEE Transactions on Industry Applications},
  title    = {Placing Storage Energies for Enhancing Small-Signal Stability of Converter-Based-Renewable Systems},
  year     = {2025},
  volume   = {61},
  number   = {4},
  pages    = {5684-5698},
  doi      = {10.1109/TIA.2025.3546196}
}

@article{liwei2023hierarchical,
  author   = {Zhou, Liwei and Preindl, Matthias},
  journal  = {IEEE Transactions on Sustainable Energy},
  title    = {Hierarchical Software-Defined Control Architecture With MPC-Based Power Module to Interface Renewable Sources and Motor Drives},
  year     = {2023},
  volume   = {14},
  number   = {1},
  pages    = {83-96},
  doi      = {10.1109/TSTE.2022.3202957}
}

@article{huazhao2025novel,
  author   = {Ding, Huazhao and Kar, Rabi and Miao, Zhixin and Fan, Lingling},
  journal  = {IEEE Transactions on Sustainable Energy},
  title    = {A Novel Design for Switchable Grid-Following and Grid-Forming Control},
  year     = {2025},
  volume   = {16},
  number   = {2},
  pages    = {1301-1314},
  doi      = {10.1109/TSTE.2024.3520989}
}

@article{mahdi2025adaptive,
  author   = {Heidari, Mahdi and Ding, Lei and Kheshti, Mostafa and Zhao, Xiaowei and Terzija, Vladimir},
  journal  = {IEEE Transactions on Sustainable Energy},
  title    = {Adaptive Inertial Control for Wind Turbine Generators in Fast Frequency Response Based on the Power Reduction Period Assessment},
  year     = {2025},
  volume   = {16},
  number   = {1},
  pages    = {377-391},
  doi      = {10.1109/TSTE.2024.3459729}
}

@article{linbin2026system,
  author   = {Huang, Linbin and He, Xiuqiang and Xin, Huanhai and Li, Zhiyi and Ju, Ping and Ma, Fuyilong and Wang, Kang and Dörfler, Florian},
  journal  = {IEEE Power and Energy Magazine},
  title    = {System Strength in Power Electronics-Dominated Power Systems: An Enabler to Stability and Control},
  year     = {2026},
  volume   = {24},
  number   = {1},
  pages    = {49-66},
  doi      = {10.1109/MPE.2025.3599131}
}

@article{chu2023stabilityconstrainedoptimizationhigh,
  author   = {Chu, Zhongda and Teng, Fei},
  journal  = {IEEE Transactions on Sustainable Energy},
  title    = {A Unified Framework for Multi-Stability Constrained Optimization in IBR-Dominated Power Systems},
  year     = {2026},
  volume   = {},
  number   = {},
  pages    = {1-14},
  doi      = {10.1109/TSTE.2026.3692222}
}

@article{yonghong2023security,
  author   = {Chen, Yonghong and Pan, Feng and Qiu, Feng and Xavier, Alinson S and Zheng, Tongxin and Marwali, Muhammad and Knueven, Bernard and Guan, Yongpei and Luh, Peter B. and Wu, Lei and Yan, Bing and Bragin, Mikhail A. and Zhong, Haiwang and Giacomoni, Anthony and Baldick, Ross and Gisin, Boris and Gu, Qun and Philbrick, Russ and Li, Fangxing},
  journal  = {IEEE Transactions on Power Systems},
  title    = {Security-Constrained Unit Commitment for Electricity Market: Modeling, Solution Methods, and Future Challenges},
  year     = {2023},
  volume   = {38},
  number   = {5},
  pages    = {4668-4681},
  doi      = {10.1109/TPWRS.2022.3213001}
}

@article{Kamwa2026future,
  author   = {Kamwa, Innocent and Badrzadeh, Babak},
  journal  = {IEEE Power and Energy Magazine},
  title    = {Future Power Systems: Dynamic Stability Must Lead the Way},
  year     = {2026},
  volume   = {24},
  number   = {1},
  pages    = {4-19},
  doi      = {10.1109/MPE.2025.3629230}
}

@article{niu2026seamless,
  author   = {Niu, Shaokun and Chen, Alian and Huang, Yaopeng and Zhang, Guanguan and Liu, Tong and Cheng, Cheng},
  journal  = {IEEE Transactions on Industrial Electronics},
  title    = {A Seamless Switching Strategy Between Andronov-Hopf Oscillator-Based Grid-Forming Mode and Grid-Following Mode for Grid-Connected Inverters},
  year     = {2026},
  volume   = {},
  number   = {},
  pages    = {1-12},
  doi      = {10.1109/TIE.2026.3692834}
}

@article{juelin2022explicit,
  author   = {Liu, Juelin and Yang, Zhifang and Zhao, Junbo and Yu, Juan and Tan, Bendong and Li, Wenyuan},
  journal  = {IEEE Transactions on Power Systems},
  title    = {Explicit Data-Driven Small-Signal Stability Constrained Optimal Power Flow},
  year     = {2022},
  volume   = {37},
  number   = {5},
  pages    = {3726-3737},
  doi      = {10.1109/TPWRS.2021.3135657}
}

@article{fu2013modeling,
  author   = {Fu, Yong and Li, Zuyi and Wu, Lei},
  journal  = {IEEE Transactions on Power Systems},
  title    = {Modeling and Solution of the Large-Scale Security-Constrained Unit Commitment},
  year     = {2013},
  volume   = {28},
  number   = {4},
  pages    = {3524-3533},
  doi      = {10.1109/TPWRS.2013.2272518}
}

@article{wang2024stability,
  author   = {Wang, Jun and Fan, Feilong and Song, Yue and Hou, Yunhe and Hill, David J.},
  journal  = {IEEE Transactions on Sustainable Energy},
  title    = {Stability Constrained Optimal Operation of Inverter-Dominant Microgrids: A Two Stage Robust Optimization Framework},
  year     = {2024},
  volume   = {15},
  number   = {3},
  pages    = {1900-1913},
  doi      = {10.1109/TSTE.2024.3387296}
}

@article{wu2024transient,
  author   = {Wu, Tao and Wang, Jianhui},
  journal  = {IEEE Transactions on Neural Networks and Learning Systems},
  title    = {Transient Stability-Constrained Unit Commitment Using Input Convex Neural Network},
  year     = {2024},
  volume   = {35},
  number   = {11},
  pages    = {16023-16035},
  doi      = {10.1109/TNNLS.2023.3291673}
}

@techreport{AEMO2025GFMAccess,
  author      = {{Australian Energy Market Operator}},
  title       = {{Grid-forming Technology Access Standards: Approach Paper}},
  institution = {{Australian Energy Market Operator}},
  year        = {2025}
}

@techreport{WECC2023GFMInverter,
  author      = {{Western Electricity Coordinating Council}},
  title       = {{Grid Forming Inverter Study Report}},
  institution = {{Western Electricity Coordinating Council}},
  year        = {2023}
}

@techreport{Aurecon2018HPR,
  author      = {{Aurecon}},
  title       = {{Hornsdale Power Reserve: Year 1 Technical and Market Impact Case Study}},
  institution = {{Aurecon}},
  year        = {2018}
}

\end{document}